\newif\ifcomment
\newif\ifanon
\newif\ifpods
    \author{Anonymous}
    \author{
      Yotam Kenneth-Mordoch
      \qquad
      Shay Sapir
      \\ Weizmann Institute of Science
      \\ \texttt{\{yotam.kenneth,shay.sapir\}@weizmann.ac.il}
    }
\date{}
\newcommand{\mintcut}{\mathrm{cut}}
\newcommand{\N}{\mathbb{N}}
\newcommand{\U}{\mathcal{U}}
\newcommand{\Q}{\mathsf{Q}}
\newcommand{\A}{\mathcal{A}}
\newcommand{\R}{\mathbb{R}}
\newcommand{\E}{\mathbb{E}}
\newcommand{\eps}{\epsilon}
\newcommand{\Norm}[1]{|| #1||}
\newcommand{\Mat}[1]{\mathbf{#1}}
\newcommand{\samp}{samp}
\newcommand{\post}{post}
\newcommand{\supp}{\operatorname{supp}}
\newcommand{\update}{\mu}
\newcommand{\adversary}{\operatorname{Adversary}}
\newcommand{\sampler}{\operatorname{SamplingAlg}}
\newcommand{\tO}{\tilde{O}}
\DeclareMathOperator{\poly}{poly}
\DeclareMathOperator{\polylog}{polylog}
\newcommand{\indic}[1]{ \mathds{1}_{\{#1\}} } %
\providecommand{\set}[1]{{\{#1\}}}
\newcommand{\eqdef}{\coloneqq}
\DeclareMathOperator{\var}{Var}
\DeclareMathOperator{\spn}{span}
\newcommand{\colnote}[3]{\textcolor{#1}{$\ll$\textsf{#2}$\gg$\marginpar{\tiny\bf #3}}}
\newcommand{\rnote}[1]{\colnote{purple}{#1--Robi}{RK}}
\newcommand{\ynote}[1]{\colnote{blue}{#1--Yotam}{YK}}
\newcommand{\ssnote}[1]{\colnote{orange}{Shay: #1}{SS}}
\newcommand{\rnote}[1]{}
\newcommand{\ynote}[1]{}
\newcommand{\ssnote}[1]{}
\newtheorem{theorem}{Theorem}[section]
\newtheorem{claim}[theorem]{Claim}
\newtheorem{lemma}[theorem]{Lemma}
\newtheorem{definition}[theorem]{Definition}
\newtheorem{fact}[theorem]{Fact}
\newtheorem{remark}[theorem]{Remark}
\title{On the Adversarial Robustness of Online Importance Sampling}
\begin{document}

\ifpods

\begin{abstract}
    Online sampling algorithms, which irrevocably either keep or discard each stream element, 
    have seen wide use in streaming due to their efficiency and simplicity.
    Braverman et al. [NeurIPS 2021] claimed that \emph{online importance-sampling} algorithms, where elements are sampled proportionally to some notion of importance, succeed with high probability when their input stream is \emph{adaptively chosen by an adversary}.
    Unfortunately, their results on importance sampling do not beat trivial bounds in many instances.
    Therefore, we reopen the question about the robustness of online importance sampling to adaptive inputs.
    This question was also addressed by Jiang, Peng and Weinstein [FOCS 2023] for the problem of $\ell_2$-subspace embedding.
 
    We develop a unified framework for online importance sampling algorithms in adaptive streams. 
    This framework offers two main advantages: first, it provides better bounds than prior work, and second, it unifies and simplifies the analysis of importance sampling algorithms across different problems.
    We then leverage the framework to provide algorithms for cut sparsification in hypergraphs and $\ell_p$-subspace embeddings in adaptive streams whose space complexity nearly matches the oblivious case (non-adaptive).

\end{abstract}
\maketitle
\else
\maketitle

\fi

\section{Introduction}

The streaming model of computation is a rich algorithmic area, and particularly useful for large-scale data analysis.
A streaming algorithm is given its input as a sequence of items that can only be read sequentially,
and is required to compute some global function of the data. 
The main measure of a streaming algorithm's efficiency is its \emph{space complexity}, i.e., the amount of space it uses.
This model is particularly useful for the analysis of massive datasets, where the input is too large for the storage available to the algorithm.
This occurs naturally in many instances such as computing statistics of large databases, IoT measurements and network traffic logs. 
A more restricted variant of the streaming model is the \emph{online} model, where the algorithm may only store a small number of items (along perhaps with some auxiliary data structures) and its decisions are irrevocable, i.e., once an item is stored it may never be deleted.
While such online algorithms in general provide weaker guarantees, they are often simpler to analyze and implement. 
For further motivation, see e.g.~\cite{CMP16,BDM+MUWZ20}.

Most of the streaming literature assumes that the input stream is fixed in advance by an \emph{oblivious adversary}.
However, these assumptions do not necessarily hold, and a recent line of work \cite{Ben-EliezerY20,BJWY22,AlonBDMNY21,KMNS21_separating,WoodruffZ21,ABJSSWZ22,HassidimKMMS22,BEO22,CGS22,AssadiCGS23,AttiasCSS24,Stoeckl23,WoodruffZ24,ChakrabartiS24} considers the more difficult setting where the stream may depend on previous algorithm outputs, modeled by an \emph{adaptive adversary}.
That is, the algorithm must output a correct response after processing each item; 
the adversary may then observe these responses and choose the next stream element.
The immediate motivation is when the input is controlled by a malicious party, but another motivating scenario is when a user repeatedly queries and updates a database based on the answers to previous queries.
A streaming algorithm is called \emph{adversarially-robust} if it succeeds with high probability against any adaptive adversary.

This adaptivity introduces dependencies that break the analysis of most algorithms designed for oblivious streams.
Furthermore, there are several demonstrated adversarial attacks against classical algorithms, notably against linear sketches~\cite{HW13,BJWY22,CGS22,AhmadianEdith24,EdithLyuNelsonSSS22,CohenNelsonSarlosStemmer23,GribelyukLWYZ24,GribelyukLWYZ25,cohen2025attackrulealltight,ahmadian2025costcompressiontightquadratic}.
On the other hand, in some cases, adversarial-robustness is obtained with space complexity similar to the oblivious setting.
For example, adversarially-robust algorithms for frequency moments in insertion-only streams require at most a poly-logarithmic overhead~\cite{BJWY22,WoodruffZ21,HassidimKMMS22,AttiasCSS24}.

In this paper, we study the problem of constructing succinct representations of the stream.
This is a common preprocessing step in many algorithms where the data is too large to be processed in its entirety.
We study two seemingly unrelated problems in this vein: hypergraph cut sparsification and subspace embedding, both of which have seen wide use in streaming algorithms and beyond.
In both problems, the goal is to preserve some approximate property of the data while using as little storage as possible.
For hypergraphs, a natural generalization of graphs where hyperedges connect any number of vertices, we wish to find a small hypergraph that preserves all cuts up to a multiplicative $(1\pm \epsilon)$ factor \cite{GMT15,KK15,CX18,CKN21,KKTY21}.
Applications of cuts in hypergraphs include scientific computing on sparse matrices~\cite{BallardDKS16}, and clustering and machine learning\cite{ZhouHS06,YadatiNYNLT19}.
In subspace embedding, the input is a matrix $\Mat{A}\in \R^{n\times d}$ and the goal is to find a smaller matrix $\tilde{\Mat{A}}\in\R^{n'\times d}$ such that for every $x\in \R^d$ we have $\Norm{\tilde{\Mat{A}}x}\in (1\pm\epsilon)\Norm{\Mat{A}x}$ \cite{CohenP15,LL22,MuscoMWYasuda22,WoodruffYasuda23}.
Subspace embeddings have many applications in numerical linear algebra, see the surveys~\cite{Woodruff14,MartinssonT20}.
We solve both problems through the construction of coresets, weighted subsets of the original stream that preserve the desired property.
One useful approach for coreset construction in oblivious streaming is (online) importance sampling;
a weighted sampling technique where each stream element is assigned an importance and sampled with probability proportional to it (in the online analogue, items are assigned probabilities when they arrive and are sampled irrevocably).

The study of adversarially robust sampling was initiated by Ben-Eliezer and Yogev~\cite{Ben-EliezerY20}, who showed that uniformly sampling $O(\epsilon^{-2} \log|\U|)$ elements is both sufficient and necessary for $\epsilon$-estimation of the input stream, where $\U$ is the universe of stream elements (see also \cite{AlonBDMNY21} for better bounds parametrized by the Littlestone's dimension of the underlying set system).
Another line of work studies the robustness of online importance sampling algorithms \cite{BravermanHMSSZ21,JPW23}.
These results assume some ``condition'' bound $\kappa$ on the stream; for example, the ratio between the minimum and maximum cut in a graph.
The analysis of \cite{BravermanHMSSZ21} showed that using $\kappa^2$ space complexity overhead in comparison to oblivious algorithms results in adversarial-robustness.
Unfortunately, in graphs, the ratio between the minimum and maximum cuts is always at least the ratio between the minimum degree and half the number of edges in the graph, which is $\Omega(n)$.
Therefore, their algorithm requires storing the entire graph.
This overhead was improved for $\ell_2$-subspace embedding \cite{JPW23}.

Our work provides two main improvements to the construction of adversarially-robust online importance sampling algorithms.
First, we provide improved bounds, and second,
we provide a generic approach that unifies the analysis of importance sampling across many problems.
(In contrast to prior work, where each problem required its own problem-specific analysis.)
This unified approach simplifies the analysis and helps sheds some light on the difficulty of obtaining better bounds for online sampling algorithms.

Finally, by combining our online algorithms with a technique of \cite{Cohen-AddadWZ23,CWXZ25}, which integrates online sampling with the well-known merge-and-reduce framework, we obtain nearly-optimal hypergraph cut sparsification and $\ell_p$ subspace embedding.

\paragraph{Parallel work.}
The first version of this paper included only results on online sampling. 
Subsequently, the online posting of \cite{CWXZ25} inspired the addition of the technique combining online sampling with merge-and-reduce to further improve space complexity.
We note that our improved online-sampling algorithms are required for obtaining adversarially-robust streaming algorithms that match their oblivious counterparts.

\subsection{Hypergraph Cut Sparsification}\label{sec:intro_hypergraphs}
A hypergraph $G=(V,E)$ is a generalization of a graph, where edges (called hyperedges) can connect any number of vertices (i.e., every $e\in E$ is a subset of $V$).
One fundamental object in the study of hypergraphs is a cut, 
which is a partition of the vertex set $V$ into two disjoint sets $S\subseteq V$ and $V\setminus S$, and whose value is defined as $\mintcut_G(S) \eqdef \sum_{e\in E} \indic{0<|e\cap S| < |e|} \cdot w_e$.
Notably, the number of hyperedges can be as large as $2^{|V|}$, and
therefore, computing exact cuts in hypergraphs is often infeasible.
This motivates the constructions of succinct cut sparsifiers that preserve the cut values of the hypergraph.

\begin{definition}
    Given a hypergraph $G=(V,E)$, a reweighted subgraph $G'=(V,E')$ of $G$ is called a quality \emph{$(1\pm \epsilon)$-cut sparsifier} of $G$ if,
    \begin{equation*}
        \forall S\subseteq V,
        \qquad
        \mintcut_{G'}(S) \in (1\pm \epsilon) \cdot \mintcut_{G}(S)
        .
    \end{equation*}
\end{definition}
Hypergraph cut sparsifier construction is a well-studied problem \cite{KK15,CX18,BST19,CKN21,KKTY21,Quanrud2024}, including in the streaming setting \cite{GMT15,STY24,KPS24,KLP25,KPS25}.
We consider this problem in the insertion-only streaming model, where the hyperedges are given one at a time, and the stream's length is the number of edges denoted by $m$.
Our first result is an adversarially-robust algorithm for hypergraph cut sparsification.
Throughout we use $\tO(x)$ to hide polylogarithmic factors of $x$.
\begin{theorem}
    \label{theorem:hypergraph-cut-sparsifier-combined}
    Let $\epsilon>0$ and a vertex set $V$ of size $n$.
    There exists an algorithm that, given an adaptive stream of $m$ hyperedges $e_1,\ldots, e_m$ on $V$, maintains a $(1\pm \epsilon)$-cut sparsifier of $G_t=(V,\set{e_i}_{i=1}^t)$ for all $t\in [m]$.
    The algorithm succeeds with probability at least $1-1/\poly(n)$ and stores at most $\tO(\epsilon^{-2} n)$ hyperedges.
\end{theorem}
Our algorithm matches (up to $\polylog$ factors in $n,\log m,\log \epsilon^{-1}$) existing offline algorithms for hypergraph sparsification \cite{KK15,CX18,CKN21,Quanrud2024}.
It similarly matches existing algorithms for insertion-only (non-robust) streams \cite{GMT15,STY24,KPS24,KLP25,KPS25}.
Previously, \cite{BravermanHMSSZ21} obtained adversarial-robust algorithms via the merge-and-reduce framework, which is a general technique that applies to coresets in general. For hypergraphs, this yields
a robust sparsifier with 
$\tO(\epsilon^{-2}n\log^3 (m/n))$ hyperedges. 
Note that since $\log m$ can be as large as $n$ in hypergraphs, \Cref{theorem:hypergraph-cut-sparsifier-combined} offers substantial savings over existing adversarially-robust algorithms in the natural case when $m$ is large.
In addition, an adversarially robust hypergraph sparsification algorithm that stores $\tO(\epsilon^{-2}n)$ hyperedges was proposed in \cite{CWXZ25}.
However, their algorithm stores an auxiliary data-structure to compute sampling probabilities, which takes $\tO(\epsilon^{-2}n\poly(r))$ storage, where $r$ is the cardinality of the largest hyperedge in $H$.
Noting that $r$ can be as large as $\Omega(n)$ we find that the storage complexity of their algorithm is polynomially worse than ours.

Note that while the theorem is stated for unweighted hypergraphs, it can easily be extended to weighted hypergraphs by simulating the insertion of each hyperedge $e$ with weight $w_e$ as the insertion of $w_e$ copies of $e$.
This increases the storage requirement to $\tO(\epsilon^{-2} n \log\log W)$ hyperedges, where $W=\sum_{e\in E} w_e$ is the sum of all hyperedge weights.
We also give an improved algorithm for the online setting.
\begin{theorem}
    \label{theorem:hypergraph-cut-sparsifier}
    Let $\epsilon>0$ and a vertex set $V$ of size $n$.
    There exists an online sampling algorithm that, given an adaptive stream of $m$ hyperedges $e_1,\ldots, e_m$ on $V$, maintains a $(1\pm \epsilon)$-cut sparsifier of $G_t=(V,\set{e_i}_{i=1}^t)$ for all $t\in [m]$.
    The algorithm succeeds with probability at least $1-2^{-n}$ and stores at most $\tO(\epsilon^{-2} n^2 \log m)$ hyperedges.
\end{theorem}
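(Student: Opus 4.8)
The plan is to reduce the cut-sparsification problem to online importance-sampling of many interconnected sums (one sum per cut) and then invoke Theorem~\ref{thm:importance_samp_adv_robust_main} with a carefully chosen amplification parameter, followed by a union bound over a suitable net of cuts. Concretely, I would run online importance-sampling where, upon arrival of hyperedge $e_t$, the sampling probability $p_t$ is set proportional (up to the amplification factor $a$) to the \emph{online $k$-strength} or online ``connectivity'' of $e_t$ in the current sparsifier — i.e. something like $p_t \ge \min\{1, a\cdot w_{e_t}/(\text{current estimate of the minimum cut separating the endpoints of } e_t)\}$, which is the standard hypergraph analogue of effective-resistance/strength sampling used in the oblivious setting \cite{KPS24,KLP25}. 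As in the oblivious analyses, this choice guarantees that the total sum of online importances is $\tO(n)$ per ``relevant'' cut and that the overall number of stored hyperedges is $\tO(a \cdot n)$ times a $\log m$ overhead from the online restriction, giving the claimed $\tO(\epsilon^{-2} n^2 \log m)$ once $a = \tO(\epsilon^{-2} n)$-type factors are accounted for.

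First I would fix a cut $S \subseteq V$ and consider the stream of values $x_t = w_{e_t}\cdot \indic{0 < |e_t \cap S| < |e_t|}$, i.e. the contributions to $\mintcut_G(S)$. The online-importance-sampling process applied to the hyperedges induces exactly the process of Definition~\ref{def:online-importance-sampling} on this subsequence, \emph{provided} the sampling probability used for $e_t$ dominates $a \cdot x_t/(x_t + \sum_{i<t}\tilde x_i)$ — this is where the choice of $p_t$ via the minimum cut separating the endpoints of $e_t$ matters, since that minimum cut lower-bounds the partial cut sum $\sum_{i \le t} x_i$ for \emph{every} cut $S$ that $e_t$ crosses (a hyperedge crossing $S$ must be counted in $\mintcut_{G_t}(S)$, which is at least the min cut separating its endpoints). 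With the boundedness hypothesis translated into $\Delta = \poly(n,m,W)$ (here using that edge weights are, say, polynomially bounded, or handling the general case by a standard bucketing of weight scales), Theorem~\ref{thm:importance_samp_adv_robust_main} then yields that for this fixed $S$, with probability $1-\delta$ the estimate $\sum_{i\le t}\tilde x_i$ is a $(1\pm\epsilon)$-approximation of $\mintcut_{G_t}(S)$ simultaneously for all $t$.

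The main obstacle — and the reason the bound is $n^2$ rather than $n$ — is the union bound over cuts in the \emph{adaptive} setting. In the oblivious setting one exploits the structure of the (fixed) hypergraph: there are only $\poly(n)$ cuts of each size class, or more precisely the cut-counting bound lets one union-bound cleverly over a polynomial-size family. But here the hypergraph itself is adversarially chosen and hence random relative to the algorithm's coins, so we cannot condition on its cut structure. Instead I would take a ``uniform'' union bound over \emph{all} $2^n$ subsets $S \subseteq V$: set $\delta = 2^{-2n}$ in Theorem~\ref{thm:importance_samp_adv_robust_main}, which costs an extra $\log(1/\delta) = \Theta(n)$ factor inside the amplification parameter $a = O(\epsilon^{-2}\log\frac{\log\Delta}{\epsilon\delta}) = \tO(\epsilon^{-2} n)$, and then union-bound over the $2^n$ cuts and over the $O(\log_{1+\epsilon}\Delta)$ checkpoint times per cut. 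This yields overall success probability $\ge 1 - 2^{-n}$ and the stored-hyperedge bound $\tO(\epsilon^{-2} n \cdot n \cdot \log m) = \tO(\epsilon^{-2} n^2 \log m)$, where one factor of $n$ is the amplification blow-up from the uniform union bound and the other $n$ (times $\log m$) is the per-cut sampling budget aggregated into a global edge count via the online-importance bound.

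A few technical points I would need to nail down carefully: (i) verifying that the \emph{single} online process on hyperedges simultaneously implements, for every cut $S$, a valid instance of Definition~\ref{def:online-importance-sampling} with amplification $\ge a$ — this requires the monotonicity argument that the chosen $p_t$ dominates the cut-specific online importance for all crossing cuts, and that items not crossing $S$ contribute $x_t=0$ and are harmless; (ii) bounding $\Delta$ for each cut-subsequence, which I would do by assuming polynomially bounded integer weights (or reducing to it), so $\log\Delta = \tO(1)$ and does not affect the leading-order bound; and (iii) checking the interaction with adaptivity at the ``interface'' — the adversary sees the sparsifier (equivalently, the cut estimates) after each step, but since Theorem~\ref{thm:importance_samp_adv_robust_main} already handles adaptive streams of numbers, and the reduction produces, for each fixed $S$, an adaptive number-stream whose adversary is a restriction of the hypergraph adversary, correctness transfers. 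The cleanest way to present (iii) is to note that conditioning on the algorithm's transcript, the value sequence $(x_t)$ for any fixed $S$ is a legal adaptive input to online importance-sampling, so the per-cut guarantee applies verbatim.
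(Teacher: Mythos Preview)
Your proposal is correct and takes essentially the same approach as the paper: sample each hyperedge by its strength in the current sparsifier with amplification $\rho=\tO(\epsilon^{-2}n)$, reduce each fixed cut $S$ to an instance of \Cref{thm:importance_samp_adv_robust_main} (the strength of any edge crossing $S$ lower-bounds the sparsifier's value on $S$, so the sampling probability dominates the per-cut online importance), set $\delta=2^{-\Theta(n)}$, and union-bound over all $2^n$ cuts.

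The one place your sketch is too quick is the size bound. It is not a per-cut statement but a global bound on $\sum_e \rho/\kappa_e^{H'}$, where the strengths $\kappa_e^{H'}$ are computed in the evolving \emph{random} sparsifier rather than in the fixed input hypergraph; the oblivious analyses you cite bound $\sum_e 1/\kappa_e^{H}$ for the true $H$ and do not transfer verbatim. The paper carries out an explicit structural argument in the style of~\cite{AG09}: it shows that the total sampled weight at strength $\le\kappa$ is $O(n\kappa)$ (via an $\alpha$-strong-component existence lemma for hypergraphs), uses the already-established correctness to bound the maximum strength $\kappa^*=O(\rho n m)$, and then sums by parts to get $|E'|=O(\rho\, n\log\kappa^*)=\tO(\epsilon^{-2}n^2\log m)$. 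Also, ``minimum cut separating the endpoints of $e_t$'' is ambiguous for hyperedges; the precise quantity is the strength $\kappa_e^{H'}=\max_{W}\lambda(H'[W\cup e])$.
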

Note that this algorithm has higher probability of success than \Cref{theorem:hypergraph-cut-sparsifier-combined}.
Previously, an adversarially-robust online sampling algorithm with $\tO(\kappa^2 \epsilon^{-2}n^2 \log m)$ edges for graphs was given in \cite{BravermanHMSSZ21}, where $\kappa$ is the ratio between the smallest and largest cut in the graph.
This result can be extended to hypergraphs using the bound on the number of hyperedges employed in the proof of \Cref{theorem:hypergraph-cut-sparsifier}.
Unfortunately, $\kappa$ can be $\Omega(2^n)$ and hence the algorithm does not improve upon the trivial solution of storing the entire stream in the worst case.
In addition, an adversarially-robust online sampling algorithm storing $\tO(\eps^{-2}n^2 \log^2 m)$ hyperedges can be obtained using the techniques of \cite{CWXZ25}, this algorithm has an $O(\log m)$ overhead factor in comparison to \Cref{theorem:hypergraph-cut-sparsifier}. 
Finally, note that in the online setting, there exists a lower bound of $\Omega(\epsilon^{-2} n \log m)$ on the number of hyperedges that must be stored in the construction of a cut sparsifier \cite{KLP25}.
Hence, the gap between our algorithm and the best possible result for online sampling (even in non-adaptive streams) is $\Theta(n)$.

\subsection{\texorpdfstring{$\ell_p$ Subspace Embedding}{lp Subspace Embedding}}\label{sec:intro_subspace_embedding}
We also consider a fundamental problem in numerical linear algebra, $\ell_p$ subspace embedding for $p>0$. 
In this problem, the input is a matrix $\Mat{A}\in\R^{n\times d}$ where $n\gg d$ and an accuracy parameter $\eps>0$, and the goal is to produce a (smaller) matrix $\tilde{\Mat{A}}\in \R^{n'\times d}$ such that $\|\tilde{\Mat{A}} x\|_p^p\in (1\pm \eps) \|\Mat{A}x\|_p^p$ for all $x\in\R^d$, where $\|y\|_p^p=\sum_{i=1}^n |y_i|^p$ for $y\in\R^n$.
A notable special case is $p=2$, also known as spectral approximation.
Oftentimes, it is desired that the rows of $\tilde{\Mat{A}}$ are a (weighted) subset of the rows of $\Mat{A}$, e.g., if the rows of $\Mat{A}$ are sparse then so are the rows of $\tilde{\Mat{A}}$.
Therefore, we restrict the output matrix $\tilde{\Mat{A}}$ to be constructed by a weighted subset of the rows of $\Mat{A}$.
In this setting, there are offline algorithms storing $\tO(\eps^{-2} d^{\max(1,p/2)})$ rows~\cite{CohenP15,MuscoMWYasuda22,WoodruffYasuda23}.

We consider the row-order streaming model, where the matrix is given row by row.
Denote by $\Mat{A}_i$ the matrix $\Mat{A}$ restricted to the first $i$ rows.
Define the \emph{online condition number} $\kappa^{OL}$ of $\Mat{A}$ to be the ratio between the largest singular value of the final matrix $\Mat{A}_n\equiv \Mat{A}$ and the smallest non-zero singular value across all intermediate matrices $\Mat{A}_i$.
We make the standard assumption that the entries of the matrix are integers bounded by $\poly(n)$ (so they can be stored in memory using $O(\log n)$ bits).

\begin{theorem}\label{thm:main_subspace_streaming}
    Let $p>0, \eps>0$ and $d\in \N$.
    There exists an algorithm that, given an adaptive stream of rows $a_1,\ldots, a_n\in \R^d$ whose entries are integers in $[-\poly(n),\poly(n)]$, maintains a $(1+\eps)$-approximate $\ell_p$ subspace embedding of $\Mat{A}_t=[a_1;\ldots;a_t]$ for all $t\in [n]$.
    The algorithm succeeds with high probability and stores at most $\tO\Big(\eps^{-2} d^{\max(1,p/2)} (\log\log (n\kappa^{OL}))^4\Big)$ rows.
\end{theorem}
This algorithm matches existing non-robust streaming and offline algorithms up to $\poly(\log d \cdot \log\log (n\kappa))$ factors~\cite{CohenP15,MuscoMWYasuda22,WoodruffYasuda23,CMP16,BDM+MUWZ20}.
Previously, there was an adversarially robust $\ell_p$-subspace embedding with an overhead of $d\kappa^{OL}$ factor~\cite{BravermanHMSSZ21}. This bound was improved for $p=2$ to $d\log\kappa^{OL}$ by~\cite{JPW23}.
Another approach by \cite{BravermanHMSSZ21} uses the merge-and-reduce framework, and has an overhead of $O(\log^3 n)$ over offline constructions.
\Cref{thm:main_subspace_streaming} beats merge-and-reduce only when $n\gg d$.
Similarly to hypergraph sparsification, we also provide an online sampling algorithm.

\begin{theorem}\label{thm:application_ellp_subspace}

    Let $p>0, \eps,\delta>0$ and $d\in \N$.
    There exists an online algorithm that, given an adaptive stream of rows $a_1,\ldots, a_n\in \R^d$ whose entries are integers in $[-\poly(n),\poly(n)]$, maintains a $(1+\eps)$-approximate $\ell_p$ subspace embedding of $\Mat{A}_t=[a_1;\ldots;a_t]$ for all $t\in [m]$.
    The algorithm succeeds with probability $1-\delta$ and stores at most $O\Big(\eps^{-2} (d\log\tfrac{\kappa^{OL}}{\eps} + \log\log n+\log\tfrac{1}{\delta})\cdot(d\log (n\kappa^{OL}))^{\max(1,p/2)}\Big)$ rows.
\end{theorem}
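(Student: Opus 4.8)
The plan is to run online importance-sampling (\Cref{def:online-importance-sampling}) with probabilities driven by \emph{online $\ell_p$ sensitivities} (Lewis weights), and then lift the single-sum guarantee of \Cref{thm:importance_samp_adv_robust_main} to all directions via a union bound over a net. Upon arrival of $a_t$, let $\tilde{\Mat{A}}_{t-1}$ be the reweighted rows stored so far, compute (from those rows) $\sigma_t \eqdef \sup_{x\neq 0} \frac{|a_t^\top x|^p}{|a_t^\top x|^p + \|\tilde{\Mat{A}}_{t-1}x\|_p^p}$, sample $a_t$ with probability $p_t = \min\{1, a\sigma_t\}$ where $a = O(\eps^{-2}(d\log\tfrac{\kappa^{OL}}{\eps} + \log\log n))$, and on success store $a_t$ reweighted by $p_t^{-1/p}$; output $\tilde{\Mat{A}}_t$ at time $t$. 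This is online: decisions are irrevocable and the stored set only grows.

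Correctness for a fixed $x\in\R^d$: the numbers $x_t\eqdef|a_t^\top x|^p$ form a non-negative (adaptively chosen) stream, in the notation of \Cref{def:online-importance-sampling} the running estimate is exactly $\sum_{i\le t}\tilde x_i = \|\tilde{\Mat{A}}_tx\|_p^p$, and by the definition of $\sigma_t$ the sampling probability satisfies $p_t\ge\min\{1, a\,\tfrac{x_t}{x_t+\sum_{i<t}\tilde x_i}\}$. Hence \Cref{thm:importance_samp_adv_robust_main} gives $\|\tilde{\Mat{A}}_tx\|_p^p\in(1\pm\eps)\|\Mat{A}_tx\|_p^p$ for all $t$, provided the spread $\Delta_x = \|\Mat{A}_Tx\|_p^p / (\text{first nonzero } |a_i^\top x|^p)$ is bounded. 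For $x$ in a bounded rational net, the polynomial integrality of the entries gives $\|\Mat{A}_Tx\|_p^p\le\poly(n)$ while the first nonzero inner product is a nonzero rational bounded below by $1/\poly(n,\kappa^{OL},1/\eps)$, so $\Delta_x\le\poly(n,\kappa^{OL},1/\eps)$ and the $\log\log\Delta_x$ term in the amplification becomes $O(\log\log n)$ up to lower-order terms.

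Lifting to all $x$ is the crux and the main obstacle. In the oblivious setting one would union-bound over an $\exp(O(d\log\tfrac1\eps))$-sized net of the Lewis ellipsoid of $\Mat{A}_t$, but since the stream is adaptive this ellipsoid is random and unavailable in advance --- the obstruction noted in \Cref{sec:intro_applications}. Instead I would fix, independently of the stream, a net $\Net$ of the Euclidean unit ball of granularity $\poly(\eps/\kappa^{OL})$ and size $(\kappa^{OL}/\eps)^{O(d)}$. Two structural facts make this suffice: (i) for each $t$ it is enough to control directions in $\mathrm{rowspace}(\Mat{A}_t)$, since both $\Mat{A}_t$ and $\tilde{\Mat{A}}_t$ vanish on its orthogonal complement; and (ii) the subspaces $\mathrm{rowspace}(\Mat{A}_1)\subseteq\cdots\subseteq\mathrm{rowspace}(\Mat{A}_T)$ form an increasing chain with at most $d$ distinct members, so a single oblivious net covers all of them at once. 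On $\mathrm{rowspace}(\Mat{A}_t)$, scaling out the global factor $\sigma_{\max}(\Mat{A}_T)$ (which cancels between $\Mat{A}_t$ and $\tilde{\Mat{A}}_t$), the bound on the online condition number confines everything to an aspect-ratio-$\kappa^{OL}$ region, so a $\poly(\eps/\kappa^{OL})$-fine net is fine enough for a $(1\pm\eps)$-approximation on $\Net$ to propagate to every direction and every $t$ --- this propagation, including the $\ell_p$-vs-$\ell_2$ comparisons, is the delicate part. Applying \Cref{thm:importance_samp_adv_robust_main} to each $x\in\Net$ with failure probability $\delta/|\Net|$ and a union bound, the required amplification is $O(\eps^{-2}(\log\log\Delta_x + \log|\Net| + \log\tfrac1\delta)) = O(\eps^{-2}(d\log\tfrac{\kappa^{OL}}{\eps} + \log\log n))$, matching the claimed $a$.

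Finally, for the storage bound: the number of stored rows is $\sum_t p_t \le a\sum_t\sigma_t$, and on the correctness event established above (which does not use the storage bound) $\tilde{\Mat{A}}_{t-1}$ spectrally approximates $\Mat{A}_{t-1}$, so each $\sigma_t$ is within a constant of the true online $\ell_p$ sensitivity of $a_t$ in $\Mat{A}_t$. The sum of these online sensitivities (equivalently, the online Lewis-weight sum, generalizing the online leverage-score bound of~\cite{CMP16}) is $O((d\log(n\kappa^{OL}))^{\max(1,p/2)})$; combined with a martingale (Freedman-type) concentration for $\sum_t p_t$, this yields storage $O\big(\eps^{-2}(d\log\tfrac{\kappa^{OL}}{\eps}+\log\log n)\cdot(d\log(n\kappa^{OL}))^{\max(1,p/2)}\big)$ with high probability. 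The one point requiring care beyond the net argument is the mild circularity between the sensitivity estimates and correctness, resolved by noting that correctness follows from the choice of $a$ alone, so the storage analysis may freely condition on it.
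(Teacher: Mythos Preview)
Your proposal is correct and follows essentially the same route as the paper: sample rows by (approximate) online $\ell_p$ sensitivities, apply \Cref{thm:importance_samp_adv_robust_main} per direction, union-bound over a fixed net of size $(\kappa^{OL}/\eps)^{O(d)}$, and bound the sample size via the Woodruff--Yasuda sum of online sensitivities (\cite{WoodruffYasuda23}). The only notable difference is how the $\Delta$-hypothesis of \Cref{thm:importance_samp_adv_robust_main} is secured: the paper injects a ridge term $\lambda\|x\|_p^p$ (with $\lambda=n^{-\Theta(pd)}$) that acts as a deterministic first stream item and simultaneously regularizes the sensitivity, whereas you rely on a rational net plus integrality of the entries to lower-bound the first nonzero inner product---both are valid and yield the same amplification parameter, though the ridge device handles leading zeros more cleanly and avoids committing to a specific net construction.
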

This result matches the current best-known adversarially-robust \emph{online} algorithms in row-order streams for $p=2$ \cite{JPW23}.
We extend it to all $p>0$, which is straightforward given our framework.
There remains a gap of roughly $O(d\log \kappa^{OL})$ to the known online algorithms in the non-adaptive setting (suppressing logarithmic factors),
which store
$\tO(\eps^{-2} (d\log (n\kappa^{OL}))^{\max(1,p/2)})$ rows~\cite{WoodruffYasuda23}.

\subsection{Organization}
The rest of the paper is organized as follows.
\Cref{sec:technical-overview} provides an overview for the techniques used in the proofs, then in \Cref{sec:adversarial-sensitivities} we provide the proofs for our self-weighted online sampling framework.
Finally, in \Cref{sec:hypergraph-sparsification,sec:subspace-embedding} we give the details for hypergaph cut sparsification and subspace embedding, respectively.

\section{Technical Overview}
\label{sec:technical-overview}

Our adversarially-robust algorithms are built upon the following scheme.
We first give a framework for adversarially-robust online importance sampling algorithms that choose sampling probabilities based solely on items sampled so far, which we term \emph{self-weighted}.
We then combine these algorithms with the widely applicable merge-and-reduce framework.

Our online sampling result provides a clean and generic approach for adversarially-robust self-weighted sampling algorithms by unifying the approach of \cite{BravermanHMSSZ21,JPW23} to make it easily applicable to any self-weighted online sampling algorithm.
Furthermore, our approach improves the parameters of their constructions. 
The proof is based on one-dimensional importance sampling, which we show is inherently adversarially robust. 
We extend this result by using a union bound on all ''dimensions'', as we discuss at the end of the next section.

Our combination of  adversarially-robust self-weighted online sampling with the merge-and-reduce framework is formalized as a black-box wrapper.
This technique was used before to improve the storage complexity of \emph{oblivious} streams \cite{Cohen-AddadWZ23,CWXZ25}.
(Which \cite{CWXZ25} then use as a basis for an adversarially-robust algorithm using the computational-paths framework.)
The basic idea is that the output sequence of an online algorithm can be fed, without storing it, as a virtual input stream to a merge-and-reduce algorithm. 
The sampling probabilities for the online sampling algorithm are then obtained based on the output of the merge-and-reduce algorithm.
The correctness argument for adversarial-robustness is a bit delicate, but it essentially follows from the adversarial-robustness of merge-and-reduce~\cite{BravermanHMSSZ21}, and the fact that our online sampling algorithms are self-weighted.

\subsection{Self-Weighted Online Sampling Framework}
Our online sampling framework is based on showing that the one-dimensional case, defined as follows, is adversarially robust.
\begin{definition}[One Dimensional Self-Weighted Online Importance Sampling]\label{def:online-importance-sampling}
    Given an input stream $x_1,\ldots,x_m\in \R_+$, self-weighted online importance-sampling with amplification parameter $a>1$ is the following algorithm.
    Upon receiving item $x_t$, set $1\geq p_t\geq \min\{1,a \tfrac{x_t}{x_t+\sum_{i=1}^{t-1}\tilde{x}_i}\}$, and 
    use fresh randomness to compute
    \[
    \tilde{x}_t =
    \begin{cases}
        \tfrac{x_t}{p_t} \qquad \text{w.p. $p_t$}, \\
        0 \qquad \text{otherwise.}
    \end{cases}
    \]
    For every $t\leq m$, return $\sum_{i=1}^t \tilde{x}_i$ as an estimate for $\sum_{i=1}^t x_i$.
\end{definition}
We say that $\tilde{y}$ is a $(1+\eps)$-approximation of $y$ if $\tilde{y}\in (1\pm\eps)\cdot y$.
The adversarial robustness of importance sampling was first examined in~\cite{BravermanHMSSZ21}, who showed that
given a deterministic (but crude) bound $\Delta>1$ on the input,
which is roughly the sum of elements in the stream in the worst-case dimension,
one can get an adversarially-robustness by paying
a $\poly(\Delta)$ factor in the storage complexity compared to the non-adaptive setting.
This result was improved for $\ell_2$ subspace embedding, to a factor of roughly $\log \Delta$~\cite{JPW23}.

Our approach extends the techniques of \cite{JPW23} to all self-weighted online sampling problems. 
Additionally, using an easy observation, 
we improve the ``cost'' of adversarial robustness from $\log \Delta$ to $\log \log \Delta$.%
\footnote{Ignoring factors depending on $\epsilon^{-1}$.}
Finally, this framework is widely applicable, as is demonstrated by our two applications, hypergraph cut sparsification and subspace embedding.
\begin{theorem}[Adversarially-Robust Self-Weighted Importance Sampling (Correctness)]\label{thm:importance_samp_adv_robust_main}
    Let $\epsilon, \delta \in (0,1)$, $\Delta > 1$.
    Given an adaptive stream of non-negative numbers $x_1,\ldots,x_m\in \R_+$ such that $\frac{\sum_{i=1}^m x_i}{x_1} \leq \Delta$;
    with probability at least $1-\delta$, self-weighted online importance-sampling with amplification parameter $a = O(\epsilon^{-2} \log \frac{\log \Delta}{\epsilon \delta})$ returns a $(1+\epsilon)$-approximation of $\sum_{i=1}^t x_i$ for all $t\in [m]$.
\end{theorem}
Note that our theorem focuses on bounding the amplification parameter $a$ and not the actual sample size.
We note that factor $\log\log \Delta$ seems necessary also for algorithms in oblivious streams that guarantee correctness at every time step.

The assumption $\tfrac{\sum_{i=1}^m x_i}{x_1} \leq  \Delta$ can be replaced with the natural (and stronger) assumption that the updates are bounded in $[1,\Delta']$, which yields $\Delta \le m\Delta'$.
Moreover, some bound on update size must be assumed, since otherwise, the sum of online importances $\sum_{t=1}^{m} \tfrac{x_t}{x_t+\sum_{i=1}^{t-1}\tilde{x}_i}$ may be as large as $\Omega(m)$, and the algorithm must then store the entire stream.
For example, consider the stream $1,2,4,\ldots, 2^m$ with amplification parameter $a=O(1)$. 
At time $t\in[m]$, we have $\tfrac{2^t}{\sum_{i=0}^{t} 2^i}=\Omega(1)$, hence $p_t=1$, and eventually all items are sampled.
We now give an overview of the proof of \Cref{thm:importance_samp_adv_robust_main}.

\paragraph{The adversary's power.}
Recall that every item is irrevocably kept with probability proportional to its importance at the moment it arrives.
Therefore, once an item is processed by the algorithm, the adversary cannot affect it anymore.
Hence, the adversary can only hope to ``fail'' the algorithm by either changing the sampling probabilities or by adding ``bad'' items to the stream.

Our proof follows by separating the adversary's power into these parts: inserting items and setting sampling probabilities.
We first show that if the sampling probabilities are ``good'', then the algorithm maintains an accurate estimate with high probability (for the amplification parameter $a$ of \Cref{thm:importance_samp_adv_robust_main}).
We then show through a bootstrapping argument that the sampling probabilities are indeed ``good'' with high probability. %

\paragraph{Sampling game.}
For the first part, consider a two-player game between a sampling algorithm, $\sampler$, and an adversary $\adversary$. 
In this game, the adversary essentially has more power compared to \Cref{thm:importance_samp_adv_robust_main} --- the adversary also picks the  sampling probabilities subject to some constraint.
The game is as follows.
Let $\epsilon\in (0,1)$.
First, $\sampler$ picks a number $a\geq 1$.
Then the game proceeds in rounds, where in the $t$-th round,
\begin{enumerate}
    \item $\adversary$ picks a number $x_t>0$, and assigns it a sampling probability $\min\{a\tfrac{x_t}{\sum_{i=1}^t x_i},1\} \leq p_t\leq 1$, and sends $(x_t,p_t)$ to $\sampler$.
    \item $\sampler$ uses fresh randomness and computes 
    \[
    \tilde{x}_t =
    \begin{cases}
        \tfrac{x_t}{p_t} \qquad \text{w.p. $p_t$}, \\
        0 \qquad \text{otherwise,}
    \end{cases}
    \]
    and sends $\tilde{x}_t$ to $\adversary$.
\end{enumerate}
The goal of $\sampler$ is to maintain $\sum_{i=1}^t \tilde{x}_i \in (1\pm\epsilon) \sum_{i=1}^t x_i$ for all $t$, and the goal of $\adversary$ is to cause $\sampler$ to return an incorrect estimate at some time $t$.
Notice that this game is similar to \Cref{def:online-importance-sampling}, but now the adversary has to use sampling probabilities $p_t$ that are constrained by the exact quantity $\sum_{i=1}^t x_i$, rather than its approximation $x_t + \sum_{i=1}^{t-1} \tilde{x}_i$.
The following technical lemma states that for the amplification parameter $a$ of \Cref{thm:importance_samp_adv_robust_main}, $\adversary$ loses the game with high probability.
\begin{lemma}[Sampling Game]\label{lemma:sampling-game-fixed-probabilities}
    Let $\Delta > 1, \epsilon,\delta\in(0,1)$.
    Consider the game between $\adversary$ and $\sampler$ with the restriction that 
    $\tfrac{\sum_{i=1}^m x_i}{x_1} \leq  \Delta$.
    For a suitable $a=O(\epsilon^{-2}\log \tfrac{\log \Delta}{\epsilon\delta})$, $\sampler$ wins the game with probability $1-\delta$.
\end{lemma}
In the oblivious (non-adaptive) setting, one can prove a similar lemma, essentially by a Bernstein's bound and by observing that the variance of $\sum_{i=1}^t x_i-\tilde{x}_i$ is bounded by $\tfrac{1}{a}(\sum_{i=1}^t x_i)^2$.
One might wish to use a similar method for the one-dimensional case in the adaptive setting, by defining an appropriate martingale sequence $X_t=\sum_{i=1}^t x_i-\tilde{x}_i$, and applying Freedman's inequality (which is analogous to Bernstein's inequality).
Unfortunately, in order to apply Freedman's inequality, we need a bound on $\sum_{i=1}^t x_i$,
which is a random variable in the adaptive setting.
We overcome this challenge by
partitioning the stream into $O(\epsilon^{-1} \log \Delta)$ phases, based on rounding $\sum_{i=1}^t x_i$ to the nearest power of $(1+\epsilon)$.
Note that this partition is used only for the analysis.

For each phase $k$, we create a virtual stream of items $\set{x_i'}_{i\in[m]}$, such that $x_j'=x_j$ for all $j\in[m]$ such that $\sum_{i=1}^j x_i \le (1+\epsilon)^k \cdot x_1$, and otherwise $x_j'=0$.
This yields a deterministic bound of $\sum_{i=1}^t x_i \le (1+\epsilon)^k \cdot x_1$ in the $k$-th stream.
We then define an appropriate martingale sequence for each virtual stream,
and use this deterministic bound on $\sum_{i=1}^t x_i$ to bound the martingale sequence using Freedman's inequality. 
The proof is concluded by applying a union bound over all virtual streams.

Note that previous work employed a similar technique, however it was only applied to the problem of $\ell_2$ subspace embedding, and furthermore it used $O(\eps^{-1}\Delta)$ phases, instead of $O(\eps^{-1}\log \Delta)$ phases as in our case \cite{JPW23}.
Thus, they require the amplification parameter $a$ to be $O(\log (\epsilon^{-1} \Delta))$ compared to the $O(\log (\epsilon^{-1} \log \Delta))$ that we achieve.
Finally, our proof technique has two main advantages.
First, by abstracting the problem to a game we obtain a simpler proof, which is easier to follow and extend.
Second, it enables the application of the same technique to other problems.
Therefore, we hope that this presentation will be a useful basis for future works.
For further details, see \Cref{sec:adversarial-sensitivities}.

\paragraph{Bootstrapping the sampling probabilities.}
We now explain how to strengthen \Cref{lemma:sampling-game-fixed-probabilities} to the case when the sampling probabilities are not computed deterministically, thus proving \Cref{thm:importance_samp_adv_robust_main}.
This follows by formalizing online importance sampling, as a version of the game between $\adversary$ and $\sampler$.
In this version, $a=O(\epsilon^{-2}\log \tfrac{\log (\Delta)}{\epsilon\delta})$ as in \Cref{lemma:sampling-game-fixed-probabilities}, and
$\adversary$ is required to choose $p_t = \min \left\{ \frac{2a x_t}{x_t+\sum_{i=1}^{t-1} \tilde{x}_i},1 \right\}$ (i.e., the ``online'' importance of $x_t$) whenever it is a valid strategy.
When this strategy is not valid, the adversary is not restricted.
Notice that if $\sampler$'s output was correct up to time $t$, then the above is indeed a valid strategy for the game, i.e., if
$\sum_{i=1}^{t-1} \tilde{x}_i\in (1\pm \epsilon) \sum_{i=1}^{t-1} x_i$, then
\begin{equation}
    \label{eq:modified-game-strategy}
    \frac{2a x_t}{x_t+\sum_{i=1}^{t-1} \tilde{x}_i}
    \geq \frac{2a x_t}{x_t+(1+\epsilon)\sum_{i=1}^{t-1} {x}_i}
    \geq \frac{a x_t}{\sum_{i=1}^{t} {x}_i},
\end{equation}    
and the strategy is valid.
\begin{proof}[Proof of \Cref{thm:importance_samp_adv_robust_main}]
    We consider a dominant strategy for an adversary that tries to fool online importance sampling.
    For every $t\in [m]$, the adversary picks some $\chi_t$ of their choice that satisfies $(\chi_t +\sum_{i\le t} x_i) /x_1 \le \Delta$, which can depend on past randomness.
    If $\frac{2a \chi_t}{\chi_t+\sum_{i=1}^{t-1} \tilde{x}_i}\geq \frac{a \chi_t}{\chi_t+\sum_{i=1}^{t-1} {x}_i}$, the adversary chooses $x_t=\chi_t$, and otherwise, they choose $x_t=0$.
    This is a dominant strategy, since the adversary can choose a strategy freely while $\sum_{i=1}^{t} \tilde{x}_i\in (1\pm \epsilon) \sum_{i=1}^{t} x_i$, and when this condition is violated, the future choices of the adversary do not affect the outcome (adversary had already won).

    Additionally, the strategy described above, along with the ``online importance'' of $\chi_t$, $p_t = \min \left\{ \frac{2a \chi_t}{\chi_t+\sum_{i=1}^{t-1} \tilde{x}_i},1 \right\}$, is a valid strategy for \Cref{lemma:sampling-game-fixed-probabilities}. (The factor $2$ can be incorporated in the parameter $a$.)
    Therefore, such an adversary loses with probability at least $1-\delta$, and since their strategy is dominant, this concludes the proof.
\end{proof}

\paragraph{Beyond the one-dimensional case.}
To obtain our bounds for hypergraph sparsification and subspace embedding, we apply \Cref{thm:importance_samp_adv_robust_main} with a union bound that we call ``uniform''. Before going into specific details, we define a general setting for which our approach is applicable, and is captured by the notion of coresets.
We consider problems defined by a universe $\U$ (e.g. $\U=\R^d$ for points, or $\U=2^{[n]}$ for hyperedges), query set $\Q$ (e.g. $\R^d$ for subspace embedding, or cuts for hypergraphs), and cost function $c:\U\times \Q \to \R_+$, where $\R+$ is the set of all non-negative reals.
For every query $x\in\Q$ and weight vector $w\in \R_+^{|\U|}$, where $R_+^{|\U|}$ is the set of all vectors in $\R^{|\U|}$ whose entries are non-negative, we define its cost in regards to $x$ as 
\[
C(w,x) \eqdef \sum_{u\in \U} w_u \cdot c(u,x).
\]
We denote the stream as $\set{(u_i,a_i)}_{i=1}^m$ where $u_i\in\U$ is an element and $a_i\in \R_+$ is its weight.
Define $\update_i\eqdef a_i\cdot e_{u_i}$ where $e_{u_i}$ is the unit vector in the direction $u_i$.
Using this, we can represent the input stream as $(\update_1,\ldots,\update_m)$, and define the vector of all inputs up to time $t$ as $w_t = \sum_{i\le t}\update_i$. 
(For example, think of $w_t$ as a weighted hypergraph represented in a vector form.)
Similarly, denote the output stream of the algorithm by $w'_t$.
We say the stream is adaptive if for every $t\leq m-1$, $\update_{t}$ may depend on $\{w'_1,\ldots,w'_{t-1}\}$.
A coreset is defined as follows.
\begin{definition}[Coreset]
    \label{def:coreset}
    For a cost function $C$, a $(1+\eps)$-coreset of $w$ is a vector $w'\in \R_+^{\U}$ such that $\supp(w')\subseteq \supp(w)$ and
    \[
    \forall x\in\Q, \qquad C(w,x)
    \in (1\pm\eps)\cdot C(w',x).
    \]
    The size of a coreset $w'$ is the number of non-zero coordinates it has.
    Coresets also satisfy the following properties.
    \begin{itemize}
        \item \textbf{Reduce:} If $w'$ is an $(1+\eps_1)$-coreset of $w$ and $w''$ is an $(1+\eps_2)$-coreset of $w'$, then $w''$ is an $(1+\eps_1)(1+\eps_2)$-coreset of $w$. 
        \item \textbf{Merge:} If $w_1'$ is an $(1+\eps_1)$-coreset of $w_1$ and $w_2$ is an $(1+\eps_2)$-coreset of $w_2'$, then $w_1'+w_2'$
        is a $(1+\max\{\eps_1,\eps_2\})$-coreset of $w_1+w_2$.
    \end{itemize}
\end{definition}
\begin{remark}
    Note that both hypergraph cut sparsifiers and $\ell_p$ subspace embeddings are in fact coresets, this is formally proven in \Cref{sec:hypergraph-coreset-properties} and \Cref{sec:subspace-coreset-properties} respectively.
\end{remark}

We say that a streaming algorithm computes/maintains a coreset if for every $t\leq m$, it outputs a vector $w'_t\in \R_+^{\U}$ that is a coreset of $w_t$.
Our framework crucially builds on online sampling algorithms, see e.g. \cite{AG09,CMP16,STY24,KLP25}, and we use a restricted notion that we call \emph{self-weighted}, as follows.
\begin{definition}[Self-Weighted Online Importance Sampling]
    \label{def:self-weighted-online}
    Given an input stream $\update_1,\ldots,\update_m$, self-weighted online importance sampling with amplification parameter $a>1$ is the following algorithm.
    Upon receiving item $\update_t$, set $1\geq p_t\ge \min \{1,a\cdot\max_{x\in Q} \frac{C(\update_t,x)}{C(w'_{t-1}+\update_t,x)} \}$ and use fresh randomness to compute
    \[
        \update'_t \gets 
        \begin{cases}
            \frac{\update_t}{p_t} & \text{with probability $p_t$, and}\\
            0 & \text{otherwise.}
        \end{cases}
    \]
    Maintain $w'_t = \sum_{i=1}^t \update'_i$.
\end{definition}

\paragraph{Putting It All Together.}
In our applications, we compute a coreset as follows.
Assume we are given some net $\Q'\subseteq \Q$ of bounded size, such that if $C(w',x)\in (1\pm \epsilon)C(w,x)$ for every $x\in \Q'$, then $C(w',y)\in (1\pm O(\epsilon))C(w,y)$ for every $y\in \Q$.
Fix some $x\in \Q'$ and observe that $C(w,x)$ is exactly a sum of elements as described in \Cref{thm:importance_samp_adv_robust_main}.
If we sample the element at time $t$ with probability at least $p_t= a \frac{C(\update_t,x)}{C(w'_t+\update_t,x)}$ (for brevity, we omit the minimum with $1$), then by \Cref{thm:importance_samp_adv_robust_main}, we obtain a $(1+\epsilon)$-approximation of $C(w,x)$, for an appropriate $a>1$.

To approximate $C(w,x)$ for all $x\in \Q'$, we sample the element at time $t$ with probability $a\cdot \max_{x\in \Q'}  \frac{C(\update_t,x)}{C(w'_t+\update_t,x)}$.
Next, we amplify the success probability of \Cref{thm:importance_samp_adv_robust_main} by a $\tfrac{1}{|\Q'|^2}$ (which increases space by a $\log |\Q'|$ factor).
By a union bound, we obtain correctness for all $x\in\Q'$, and hence $w_t'$ is a coreset of $w_t$.
We call such a union bound ``uniform'' because it is based on the same net for all $w\in \R_+^{\U}$.

To make this idea concrete, we now give an overview of the construction of a cut sparsifier for hypergraphs (\Cref{theorem:hypergraph-cut-sparsifier}).
The details for $\ell_p$ subspace embedding are similar and omitted for brevity. 
Throughout, let $G=(V,E)$ be some hypergraph.
To construct a cut sparsifier of $G$, we choose the net $\Q'$ to be the entire set of cuts $2^{[n]}$.
It is clear that $C(w,x)$ is simply the number of hyperedges crossing the cut for every $x\in 2^{[n]}$.
Therefore, the sampling probability of each hyperedge is given by the smallest cut which intersects it, and then applying a union bound on all the cuts to conclude the proof.%
\footnote{For simplicity, the algorithm in \Cref{theorem:hypergraph-cut-sparsifier} samples according to strong connectivity, which can be shown to give a lower bound on the size of the minimum cut intersecting $e$.}
The bound on the size of the sparsifier follows from structural analysis akin to \cite{AG09} and is detailed in \Cref{sec:hypergraph-sparsification}.

\paragraph{Gap to Oblivious Online Sampling.}
Unfortunately, the ``uniform'' union bound approach leaves a sizable bound to the oblivious setting.
For example, for $\ell_p$ subspace embeddings, in the oblivious setting, one can directly analyze the supremum of a certain quantity over the set $\{x:\|\Mat{A}x\|_{p} = 1\}$ using a standard symmetrization argument and some other clever arguments.
In comparison, our approach requires a uniform high probability bound for each element in the net $\Q$.
It is unclear how to employ such symmetrization arguments  in the adaptive setting, since the set  $\{x:\|\Mat{A}x\|_{p} = 1\}$ is now a random variable.
Hence, it remains open to close the gap between \Cref{thm:application_ellp_subspace,theorem:hypergraph-cut-sparsifier} and the oblivious setting for online sampling algorithms.

\subsection{Black-Box Wrapper: Online Sampling and Merge-and-Reduce}\label{sec:technical_merge_reduce}
We now present a black-box wrapper, based on a framework of \cite{Cohen-AddadWZ23,CWXZ25}, that takes a self-weighted online sampling algorithm and produces an algorithm with smaller space complexity (though no longer an online algorithm). 
The wrapper feeds the output of a self-weighted sampling algorithm to
a merge-and-reduce algorithm, and uses the output of the merge-and-reduce to compute the sampling probabilities for the former.
We show that if the online sampling algorithm is adversarially-robust, then so is the output of the combined algorithm.

\paragraph{Merge-and-reduce.}
The well-known merge-and-reduce framework is as follows.
First, assume there exists an offline algorithm that constructs a $(1+\eps)$-coreset of size $K(\eps)$, which we will use with $\eps'$ that is set later.
Next, partition the input stream into chunks of size $K=K(\eps')$. We construct a binary tree whose leaves are these chunks, and every node holds at most $K$ elements.
Whenever a node has that its two children store $K$ elements, it merges them to size $2K$ and then reduces the merged set back to size $K$ using the offline algorithm. We then clear the storage of the two children.
It is easy to see that the number of levels in this procedure is $\log (m/K)$, and that we store at most $2K$ elements in each level at the same time.

Observe that when a node collects and merges the elements of its two children, it obtains a $(1+\epsilon')^2$-coreset of their union by \Cref{def:coreset}.
Thus, after applying this procedure for $\log (m/K)$ levels, we obtain that the root holds a $(1+\epsilon')^{\log (m/K)}\leq 1+\eps$ coreset of the entire stream, where we set $\epsilon'=\tfrac{\eps}{3\log (m/K)}$.

Previously, Braverman et al.~\cite{BravermanHMSSZ21} claimed that merge-and-reduce is adversarially-robust, and we provide a short proof in \Cref{sec:merge_reduce_proof} for completeness.
\begin{theorem}[Adversarial-robustness of merge-and-reduce]\label{thm:merge_reduce_robust}
    Let a universe $\U$, a query set $Q$, a cost function $C$
    and $0<\eps<\tfrac{1}{2}, 0<\delta<1$. 
    Assume that for all $0<\eps'<\tfrac{1}{2}, 0<\delta'<1$, there exists an offline algorithm $\A_{\eps',\delta'}$ such that when it is given a vector $w\in\R_+^{\U}$, with probability $1-\delta'$, the algorithm outputs a $(1+\eps')$-coreset of $w$ of size $g(\eps',\delta')\geq 2$, for some function $g$.

    Then, there exists an \emph{adversarially-robust} streaming algorithm, 
    that given an input stream $(\update_1,\ldots,\update_m)$ in an adaptive stream, maintains a $(1+\eps)$-coreset of $P$ of size $O(g(\tfrac{\epsilon}{3\log m},\tfrac{\delta}{m})\cdot\log m)$.
    The algorithm succeeds with probability $1-\delta$.
\end{theorem}
\begin{remark}
    Suppose that every element in $\U$, and the weight of every coreset element, can be represented using $s$ bits of space.
    Then, the merge-and-reduce algorithm uses $O(s\cdot g(\tfrac{\epsilon}{3\log m},\tfrac{\delta}{m})\cdot\log m)$ bits of space.
\end{remark}

\paragraph{Black-box wrapper.}
Online sampling composes well with merge-and-reduce.
This idea, proposed by \cite{Cohen-AddadWZ23,CWXZ25}, is to apply a self-weighted online sampling algorithm on the input stream, and then feed its output stream into a merge-and-reduce procedure.
In addition, to avoid storing the output of sampling algorithm, we modify it to choose sampling probabilities according to the coreset constructed by the merge-and-reduce procedure.
Observe that this allows us to obtain much better storage complexity than directly using merge-and-reduce as the stream length is now much shorter.
An illustration of this process is provided in \Cref{fig:blackbox-reduction}.
The following theorem states the guarantees of our wrapper, and particularly, its relation to adversarial-robustness, which was not studied before.
\begin{theorem}
    \label{theorem:black-box-reduction}
    Let $\eps_1,\eps_2,\delta_1,\delta_2>0$.
    Suppose there exists an adversarially-robust self-weighted sampling algorithm $\A_{\samp}$ with amplification parameter $a>1$, that with probability $1-\delta$, constructs a $(1+\eps_1)$-coreset of size $h(m,\eps_1,\delta_1)$, where $m$ is the stream's length.
    Furthermore, suppose there exists an offline algorithm that computes with probability $1-\delta_1$ a $(1+\eps_2)$-coreset of size $g(\epsilon_2,\delta_2)$.

    Then, for all $\eps,\delta>0$, there exists an adversarially-robust algorithm that, given an adaptive stream of length $m$, with probability $1-\delta$, outputs a $(1+\eps)$-coreset of size $O\big(g(\tfrac{\eps}{3\log h(m,\frac{\eps}{3},\frac{\delta}{2})},\tfrac{\delta}{2h(m,\frac{\eps}{3},\frac{\delta}{2})})\cdot\log (h(m,\tfrac{\eps}{3},\frac{\delta}{2}))\big)$.
\end{theorem}
\begin{figure}[t]
\centering
\includegraphics{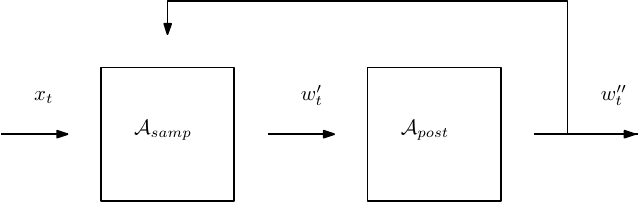}%
\caption{Black-box wrapper: the self-weighted sampler produces a compressed stream that is fed to merge-and-reduce, while merge-and-reduce supplies the coreset used to compute sampling probabilities.}
\label{fig:blackbox-reduction}
\end{figure}

\begin{proof}[Proof of \Cref{theorem:black-box-reduction}]
    We show that the algorithm described above, of combining self-weighted online sampling and merge-and-reduce, satisfies the guarantees of the theorem.
    Denote the output of $\A_{\samp}$ by $w'_t$, and the output of $\A_{\post}$ by $w''_t$.
    We will use the following claim.
    \begin{claim}\label{lem:online_and_merge_reduction}
        For every adaptive input stream $\update_1,\ldots,\update_m\in \U$, if $\A_{\samp}$ samples at time $t$ according to probability $1\geq p_t \geq (1+\epsilon_2)a\cdot \max_{x\in Q} \frac{C(\update_t,x)}{C(w_t''+\update_t,x)}$, then with probability $1-\delta_1-\delta_2$, the ouput $w''_t$ is a $(1+\epsilon_1)(1+\epsilon_2)$-coreset of $w_t$ for all $t\leq m$.
    \end{claim}
    \begin{proof}[Proof of \Cref{lem:online_and_merge_reduction}]
        By \Cref{thm:merge_reduce_robust}, we have that with probability $1-\delta_2$, for all $t\leq m$, the output $w_t''$ is a $(1+\eps_2)$-coreset of $w_t'$ (this holds for adaptive streams and hence for the stream $w'_t$).
        Therefore, we have $p_t = (1+\eps_2)a \cdot \max_{x\in Q} \frac{C(\update_t,x)}{C(w_t''+\update_t,x)} \ge a\cdot\max_{x\in Q} \frac{C(\update_t,x)}{C(w_t'+\update_t,x)}$. 
        Hence, the self-weighted sampling algorithm succeeds with probability $1-\delta_1$.
        By the law of total probability, we obtain that both subroutines succeed with probability $1-\delta_1-\delta_2$, and the proof is concluded by the Reduce property of \Cref{def:coreset}.
    \end{proof}
    Using the claim we immediately obtain the correctness guarantee of  \Cref{theorem:black-box-reduction}.
    The bound on the size is using \Cref{thm:merge_reduce_robust} by observing that the virtual stream inserted to the merge-and-reduce algorithm is of length $h(m,\tfrac{\eps}{3},\delta_1)$.
\end{proof}

\paragraph{Hypergraph Cut Sparsification.}
We continue with the running example of hypergraph cut sparsification, and employ the reduction using the self-weighted online sampling algorithm of \Cref{theorem:hypergraph-cut-sparsifier}.
The proof for subspace embeddings (\Cref{thm:main_subspace_streaming}) follows similar lines, and is deferred to \Cref{sec:subspace_merge_reduce}.
\begin{proof}[Proof of \Cref{theorem:hypergraph-cut-sparsifier-combined}]
    We apply \Cref{theorem:black-box-reduction}, with the adversarially-robust self-weighted online algorithm of \Cref{theorem:hypergraph-cut-sparsifier}, and
    an offline algorithm of size $K=\tilde{O}(\eps^{-2} n)$ that succeeds with probability $1-1/\poly(n)$, e.g. \cite{Quanrud2024}.
    Notice that as explained above, sampling probabilities rely only on the values of cuts in the hypergraph so far and hence can be computed using a sparsifier. 
    By \Cref{theorem:hypergraph-cut-sparsifier}, the sparsity of $w'$ is $m'=\tilde{O}(\eps^{-2}n^2 \log m)$.
    Plugging this into \Cref{theorem:black-box-reduction}, we obtain the desired bound.
    Finally, the algorithm succeeds with probability $1-2^{-n}-1/\poly(n)=1-1/\poly(n)$.
\end{proof}

\section{Importance Sampling with Adversarial Sensitivities}
\label{sec:adversarial-sensitivities}

In this section, we prove \Cref{lemma:sampling-game-fixed-probabilities}, showing that for $a=O(\epsilon^{-2}\log\tfrac{\log \Delta}{\epsilon\delta})$, $\sampler$ wins the game against $\adversary$ with probability $1-\delta$.
We will use the following definition and results concerning \emph{martingales}.
\begin{definition}[Martingale]
    A martingale is a sequence $X_0,X_1,\ldots$ of random variables with finite mean, such that for every $i\geq 0$,
    \[
    \E [X_{i+1}|X_i,\ldots,X_0]=X_i.
    \]
\end{definition}
We use Freedman's inequality~\cite{Freedman75}, which is an analogous version of Bernstein's inequality for martingales.
Specifically, we use the following formulation, based on~\cite{Tropp11_matrix_freedman}.
\begin{theorem}[Freedman's Inequality]
    \label{theorem:freedman-inequality}
    Let $X_0,X_1,\ldots,X_n$ be a martingale with $X_0=0$.
    Suppose there exists 
    $M>0,\sigma^2>0$ such that,
    for every $1\leq i\leq n$, %
    $|X_i-X_{i-1}|\leq M$ with probability $1$ (a.s.),
    and the predictable quadratic variation satisfies
    \[\sum_{j=1}^i \var(X_j|X_{j-1},\ldots,X_0) \equiv \sum_{j=1}^i \E[(X_j-X_{j-1})^2|X_{j-1},\ldots,X_0]\leq \sigma^2\] with probability $1$.
    Then, for every $\lambda>0$,
    \[
    \Pr(\max_{i\in[n]}|X_i|>\lambda)  \leq 2\exp\big(-\frac{\lambda^2/2}{\sigma^2 + M\lambda/3}\big).
    \]
\end{theorem}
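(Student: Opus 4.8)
The plan is to prove the one-sided tail bound $\Pr(\max_{i\in[n]}X_i>\lambda)\le\exp\!\big(-\tfrac{\lambda^2/2}{\sigma^2+M\lambda/3}\big)$ by the classical exponential--supermartingale (Bennett/Bernstein) method, and then deduce the two-sided bound by running the same argument on $-X_i$ and union bounding. Let $\mathcal{F}_i\eqdef\sigma(X_0,\ldots,X_i)$, write $d_i\eqdef X_i-X_{i-1}$ for the martingale differences, so $\E[d_i\mid\mathcal{F}_{i-1}]=0$ and $|d_i|\le M$ a.s., and let $W_i\eqdef\sum_{j=1}^i\E[d_j^2\mid\mathcal{F}_{j-1}]\le\sigma^2$ a.s.\ be the predictable quadratic variation.

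The first step is the pointwise inequality $e^{\theta z}\le 1+\theta z+g(\theta)z^2$ for all real $z\le M$ and all $\theta>0$, where $g(\theta)\eqdef(e^{\theta M}-1-\theta M)/M^2$. This follows from the monotonicity of the map $z\mapsto(e^z-1-z)/z^2$ on $\R$ (a short two-derivative calculation), evaluated at $z=\theta M$. Substituting $z=d_i$, taking the conditional expectation given $\mathcal{F}_{i-1}$, and using $\E[d_i\mid\mathcal{F}_{i-1}]=0$ gives
\[
\E[e^{\theta d_i}\mid\mathcal{F}_{i-1}]\le 1+g(\theta)\,\E[d_i^2\mid\mathcal{F}_{i-1}]\le\exp\!\big(g(\theta)\,\E[d_i^2\mid\mathcal{F}_{i-1}]\big).
\]

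The second step is to fix $\theta>0$ and set $Y_i\eqdef\exp(\theta X_i-g(\theta)W_i)$. Then $Y_0=1$ (since $X_0=0$ and $W_0=0$), $Y_i\ge 0$, and by the first step $\E[Y_i\mid\mathcal{F}_{i-1}]=Y_{i-1}\,e^{-g(\theta)(W_i-W_{i-1})}\,\E[e^{\theta d_i}\mid\mathcal{F}_{i-1}]\le Y_{i-1}$, so $(Y_i)_{i=0}^n$ is a nonnegative supermartingale. Because $W_i\le\sigma^2$ deterministically, the occurrence of $X_i>\lambda$ for some $i$ forces $\max_i Y_i\ge\exp(\theta\lambda-g(\theta)\sigma^2)$, and hence Doob's maximal inequality for nonnegative supermartingales yields $\Pr(\max_{i\in[n]}X_i>\lambda)\le e^{-\theta\lambda+g(\theta)\sigma^2}$.

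It then remains to optimize over $\theta>0$: the minimizer is $\theta=\tfrac1M\ln(1+\lambda M/\sigma^2)$, which produces the Bennett bound $\exp\!\big(-\tfrac{\sigma^2}{M^2}h(\lambda M/\sigma^2)\big)$ with $h(u)=(1+u)\ln(1+u)-u$, and the elementary inequality $h(u)\ge\tfrac{u^2/2}{1+u/3}$ for $u\ge 0$ turns this into the claimed $\exp\!\big(-\tfrac{\lambda^2/2}{\sigma^2+M\lambda/3}\big)$. Applying everything above verbatim to the martingale $(-X_i)$, which has the same increment bound $M$ and the same predictable quadratic variation, and taking a union bound over the events $\{\max_i X_i>\lambda\}$ and $\{\max_i(-X_i)>\lambda\}$ gives the stated factor of $2$. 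No step is conceptually hard; the places that need care are the two real-analysis facts — monotonicity of $(e^z-1-z)/z^2$ and $h(u)\ge\tfrac{u^2/2}{1+u/3}$ — together with checking that $(Y_i)$ is genuinely a supermartingale for the natural filtration so that the maximal inequality applies, and this bookkeeping is where I expect the bulk of the (routine) work to lie.
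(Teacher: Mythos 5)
Your proof is correct, but note that the paper does not prove this theorem: it is stated as a cited result, following Freedman (1975) and Tropp's formulation, so there is no ``paper's own proof'' to compare against. What you give is the standard exponential--supermartingale derivation of Freedman's inequality, and all the key steps check out: the pointwise bound $e^{\theta z}\le 1+\theta z+g(\theta)z^2$ for $z\le M$ via monotonicity of $(e^w-1-w)/w^2$; the fact that $W_i$ is predictable (i.e.\ $\mathcal F_{i-1}$-measurable), which is what makes $Y_i=\exp(\theta X_i-g(\theta)W_i)$ a genuine supermartingale; Doob's maximal inequality for nonnegative supermartingales (or, equivalently, optional stopping at the first crossing time); the optimizer $\theta^\ast=\tfrac1M\ln(1+\lambda M/\sigma^2)$ giving the Bennett form with $h(u)=(1+u)\ln(1+u)-u$; the calculus inequality $h(u)\ge\tfrac{u^2/2}{1+u/3}$; and symmetrization via $-X_i$, whose increments obey the same bound $|{-d_i}|\le M$ and the same predictable quadratic variation. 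One small presentational point worth making explicit when you write this up: the statement as given bounds $\max_i|X_i|$, and since $\max_i|X_i|>\lambda$ is contained in $\{\max_i X_i>\lambda\}\cup\{\max_i(-X_i)>\lambda\}$, the union bound is the right final step, as you say. Overall this is a complete and correct self-contained proof of the cited theorem.
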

We are now ready to prove \Cref{lemma:sampling-game-fixed-probabilities}.
\begin{proof}[Proof of \Cref{lemma:sampling-game-fixed-probabilities}]
    By Yao's principle, we can assume without loss of generality that $\adversary$ is deterministic.
    That is, if there was a randomized adversary with randomness $r$ that wins the game with probability $>\delta$, then there must be a choice for $r$ for which the adversary wins with probability $>\delta$. 
    Fixing $r$ to this choice yields a deterministic adversary.
    Furthermore, note that we can assume that $x_1 =1$ without loss of generality by rescaling.

    Let $m$ be the length of the stream.
    For every integer $0\leq t\leq m$, let $X_t=\sum_{i=1}^t \tilde{x}_i-x_i$.
    We have $X_0=0$ and $X_t=X_{t-1}+\tilde{x}_t-x_t$ for $t\geq 1$,
    hence, $\E [X_{t}|X_{t-1},\ldots,X_0]=X_{t-1}$ and thus $X_0,X_1,\ldots$ is a martingale.
    The difference sequence satisfies 
    \[|X_t-X_{t-1}|=|\tilde{x}_t-x_t|\leq  \tfrac{1}{a}\sum_{i=1}^t x_i %
    \]
    and the variance satisfies
    \[
    \var (X_t | X_{t-1},\ldots,X_0) = \tfrac{x_t^2}{p_t}-x_t^2\leq  \tfrac{x_t}{a}\sum_{i=1}^t x_i, %
    \]
    and thus the quadratic variation is $\sum_{i=1}^t \var (X_t | X_{t-1},\ldots,X_0)\leq\tfrac{1}{a}(\sum_{i=1}^t x_i)^2$.
    
    We cannot use Freedman's inequality ``as is'', because $\sum_{i=1}^t x_i$ is a random variable.
    Instead, for the sake of analysis, we consider $L=O(\tfrac{1}{\epsilon}\log \Delta)$ stopped processes, as follows.
    For every $\ell\in [L]$, let $\tau_\ell$ be the first time $t$ for which $\sum_{i=1}^t x_i \geq (1+\epsilon)^\ell$.
    Since $\adversary$ is deterministic, for every $t\leq m$, $x_t$ is determined by $X_0,\ldots,X_{t-1}$, hence it also determines 
    the decision whether $t=\tau_\ell$ (i.e., $\tau_\ell$ is a stopping time).
    We define $Y_{t,\ell}$ as the following random process:
    as long as $t\leq \tau_{\ell}-1$, let $Y_{t,\ell}=X_t$. 
    At $t=\tau_{\ell}$, let the residue be $R = (1+\epsilon)^\ell - \sum_{i=1}^{\tau_{\ell}-1} x_i$, and consider a virtual adversary, that inserts $x_{\tau_\ell,\ell}=R$ and $p_{\tau_\ell,\ell} = \min\{a \tfrac{R}{\sum_{i=1}^{\tau_\ell-1} + R},1\}$. To simplify notations, denote by $\tilde{R}$ the response of $\sampler$.
    Set $Y_{\tau_\ell,\ell}=X_{\tau_\ell-1}+\tilde{R}-R$, and for every $t>\tau_\ell$, $Y_{t,\ell}=Y_{t-1,\ell}$.

    These random processes $Y_{t,\ell}$ are clearly still martingales, and their difference sequence and variance admit the following bounds.
    For $t< \tau_\ell$,
    the difference sequence satisfies
    \[
    |Y_{t,\ell}-Y_{t-1,\ell}|\leq \tfrac{1}{a}\sum_{i=1}^t x_i \leq \tfrac{(1+\epsilon)^\ell}{a},
    \]
    the variance satisfies
    \[
    \var(Y_{t,\ell}|Y_{t-1,\ell},\ldots,Y_{0,\ell})\leq \tfrac{x_t}{a}\sum_{i=1}^t x_i,
    \]
    and hence, $\sum_{i=1}^t \var(Y_{t,\ell}|Y_{t-1,\ell},\ldots,Y_{0,\ell})\leq \tfrac{(1+\epsilon)^{2\ell}}{a}$.
    These same bounds hold for $t=\tau_\ell$, and immediately also for $t>\tau_\ell$.
    By Freedman's inequality (\Cref{theorem:freedman-inequality}),
    \[
        \Pr[\max_{t\in [m]}|Y_{t,\ell}|>\epsilon (1+\epsilon)^\ell] \leq
        2\exp\Big(-\frac{\epsilon^2 (1+\epsilon)^{2\ell}/2}{\tfrac{1}{a} (1+\epsilon)^{2\ell} + \tfrac{\epsilon}{3a}(1+\epsilon)^{2\ell}}\Big)
        \leq 2\exp\big(-\frac{\epsilon^2a }{3}\big).
    \]
    For suitable $a=O(\epsilon^{-2}\log\tfrac{\log \Delta}{\epsilon\delta})$, the probability above is bounded by $\tfrac{\delta}{L}$.
    By a union bound, with probability at least $1-\delta$, we have $\max_{t\in [m]}|Y_{t,\ell}|\leq \epsilon (1+\epsilon)^\ell$ for all $\ell\in [L]$.

    In conclusion,
    for every $t\leq m$, we must have $\sum_{i=1}^t x_i\leq x_1 \Delta \leq \Delta$, where the last inequality is by our assumption that $x_1=1$, hence there exists $\ell\in [L]$ such that $\sum_{i=1}^t x_i\in [(1+\epsilon)^{\ell-1},(1+\epsilon)^{\ell}]$.
    Therefore, $X_t=Y_{t,\ell}$, and we have
    \[
    |X_t|\leq \max_{t\in[m]}|Y_{t,\ell}|\leq \epsilon (1+\epsilon)^\ell\leq \epsilon(1+\epsilon)\sum_{i=1}^t x_i.
    \]
    Rescaling $\epsilon$ concludes the proof.
\end{proof}

\section{Application: Unweighted Hypergraph Cut Sparsification}
\label{sec:hypergraph-sparsification}
This section proves \Cref{theorem:hypergraph-cut-sparsifier}.
It is similar to the construction of cut sparsifiers for graphs using online sampling provided in \cite{AG09}.

We begin by presenting several important definitions, which are based on the work of \cite{Quanrud2024,KPS24}.
Let $H=(V,E)$ be an unweighted hypergraph.
For every partition $V_1,\ldots,V_k$ of $V$, let $E[V_1,\ldots,V_k]$ denote the set of hyperedges that are not entirely contained in any of the $V_i$'s.
The structural properties of hypergraphs which allow us to bound the size of the sparsifier rely on the notion of \emph{normalized cuts}.
For every $k\in [2,|V|]$, a \emph{$k$-cut} in $H$ is a partition of the vertex set $V$ into $k$ disjoint sets $V_1,\ldots,V_k$.
The value of the cut is the number of hyperedges that intersect the cut, denoted by $\mintcut_H(V_1,\ldots, V_k)\eqdef|E[V_1,\ldots,V_k]|$.
Finally, the \emph{normalized cut value} of a $k$-cut is defined as $|E[V_1,\ldots,V_k]|/(k-1)$, we denote the minimum normalized cut value of $H$ by $\lambda(H)$.

For every vertex subset $W\subseteq V$, let $H[W]$ be the sub-hypergraph of $H$ induced by $W$, i.e. the hypergraph on the vertices $W$ that includes only hyperedges $e\in E$ such that $e\subseteq W$.
The \emph{strength} of a hyperedge $e\in E$ is given by 
\begin{equation*}
    \kappa_e^H
    = \max_{W\subseteq V} \lambda(H[W\cup e])
    ,
\end{equation*}
where we remove the superscript $H$ when it is clear from context.
We will also need the following fact.
\begin{fact}
    \label{fact:number-k-cuts}
    Let $n$ be an integer.
    Summing over all $k\in [2,n]$, the 
    number of $k$-cuts in a hypergraph on $n$ vertices is the bell number $B_n$, which in turn is bounded by (Theorem 3.1 from \cite{BT10}),
    \begin{equation*}
        B_n 
        <
        \left( \frac{0.792 n}{\log(n+1)} \right)^n
        \le 2^{n \cdot \log n }
        .
    \end{equation*}
\end{fact}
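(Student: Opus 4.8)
The plan is to treat the statement as a routine combinatorial count followed by a quoted numerical estimate, so there is very little to do. First I would unfold the definitions from this section: a $k$-cut of a hypergraph $H=(V,E)$ with $|V|=n$ is a partition of $V$ into $k$ nonempty blocks $V_1,\dots,V_k$, and since the value $\mintcut_H(V_1,\dots,V_k)=|E[V_1,\dots,V_k]|$ depends only on the unordered collection $\{V_1,\dots,V_k\}$, distinct $k$-cuts are in bijection with unordered set partitions of $V$ into exactly $k$ parts. Hence the number of $k$-cuts equals the Stirling number of the second kind $S(n,k)$, and summing over $k\in[2,n]$ and using the standard identity $\sum_{k=1}^{n}S(n,k)=B_n$ together with $S(n,k)=0$ for $k>n$ gives $\sum_{k=2}^{n}S(n,k)=B_n-S(n,1)=B_n-1\le B_n$, which already proves the first assertion (in fact with the slightly sharper count $B_n-1$).

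For the displayed numerical bound, the inequality $B_n<\bigl(0.792\,n/\log(n+1)\bigr)^n$ is quoted verbatim as Theorem~3.1 of Berend--Tassa~\cite{BT10}, so nothing needs to be reproved there. The final inequality is then elementary: for every integer $n\ge 2$ one has $\log(n+1)\ge\log 3>0.792$, hence $0.792\,n/\log(n+1)\le n$, and therefore $B_n<n^n=2^{n\log n}$; the restriction $n\ge 2$ is harmless, since a hypergraph admitting a cut has at least two vertices.

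I do not expect any genuine obstacle here, but there are two bookkeeping points I would be careful about. The first is to insist that a $k$-cut is an \emph{unordered} partition: counting ordered tuples $(V_1,\dots,V_k)$ would overcount by a factor of $k!$ and break the clean bound by $B_n$, so one must use that $\mintcut_H$ is symmetric in its arguments. The second is the (immaterial) off-by-one between $B_n$ and the exact count $B_n-1$, together with the small-$n$ and logarithm-base conventions noted above; none of these affects how the Fact is used later, where only the crude consequence ``at most $2^{O(n\log n)}$ cuts'' is needed for the union bound.
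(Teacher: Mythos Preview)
Your proposal is correct and complete; the paper itself states this as a \emph{Fact} without proof, simply citing \cite{BT10} for the numerical estimate, so your justification actually goes slightly beyond what the paper records. Your only caveats---the unordered-partition convention, the harmless $B_n$ versus $B_n-1$ discrepancy, and the logarithm base---are exactly the right things to flag, and none of them affects the crude $2^{O(n\log n)}$ bound that the paper actually uses in the union bound of \Cref{lemma:cut-sparsification-quality}.
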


\subsection{\texorpdfstring{Proof of \Cref{theorem:hypergraph-cut-sparsifier}}{Proof of Theorem}}
Note that we prove the theorem for the stronger notion of $k$-cut sparsifiers, which preserve all $k$-cuts for $k\in [2,n]$ up to multiplicative $(1\pm \epsilon)$ factor.
The algorithm used for constructing the sparsifier is presented in \Cref{algorithm:sample-hypergraph}.
We prove \Cref{theorem:hypergraph-cut-sparsifier} by showing that the algorithm returns a small $(1\pm \epsilon)$-cut sparsifier of the hypergraph $H$ with high probability.
The proof of the theorem is split into two parts:
1) Showing that the output of the algorithm is a $(1\pm \epsilon)$ cut sparsifier with high probability, and
2) bounding the number of hyperedges in the resulting sparsifier.

For every $i\in [m]$, let $H_i=(V, E_i = \left\{ e_1,\ldots, e_i \right\})$ be the hypergraph on the first $i$ hyperedges, and let $H_i'=(V,E_i',w')$ be the sparsifier after the $i$-th insertion, note that $H_i'$ is a weighted hypergraph with weight function $w':E_i'\to \R_{>0}$.
\begin{lemma}[Correctness]
    \label{lemma:cut-sparsification-quality}
    For every adaptive adversary and $i\in [m]$, with probability at least $1-2^{-4n}$,
    \Cref{algorithm:sample-hypergraph} outputs a $(1\pm \epsilon)$-cut sparsifier $H'_i$ of $H_i$.
\end{lemma}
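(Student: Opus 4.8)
The plan is to reduce the statement to the Sampling Game of \Cref{thm:meta-importance-robust} by identifying, for each fixed $k$-cut, a single scalar sum that \Cref{algorithm:sample-hypergraph} is implicitly maintaining via online importance-sampling. Fix an arbitrary $k$-cut $(V_1,\ldots,V_k)$. Its value in $H_i$ is $\mintcut_{H_i}(V_1,\ldots,V_k)=\sum_{j\le i}\indic{e_j\in E[V_1,\ldots,V_k]}$, a sum of non-negative numbers (here each ``item'' $x_j$ is $1$ if the $j$-th hyperedge crosses the cut and $0$ otherwise, and items with $x_j=0$ are harmless — they contribute nothing and never get sampled). The key point is that the sampling probability the algorithm assigns to a crossing hyperedge $e_j$, which is $\rho/\kappa_{e_j}$ up to the $O(n)$ amplification, dominates $a\cdot x_j/\mintcut_{H_j}(V_1,\ldots,V_k)$: this is exactly the classical fact (going back to \cite{AG09} for graphs, \cite{Quanrud2024,KPS24} for hypergraphs) that the strength $\kappa_{e_j}^{H_j}$ of an edge is at most the current min normalized cut value of the component it joins, which is at most $\mintcut_{H_j}(V_1,\ldots,V_k)/(k-1)\le \mintcut_{H_j}(V_1,\ldots,V_k)$. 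So for each fixed cut, the algorithm is running online importance-sampling with the required amplification, and $\mintcut_{H'_i}(V_1,\ldots,V_k)=\sum_{j\le i}\tilde x_j$.

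Next I would invoke \Cref{thm:importance_samp_adv_robust_main} (equivalently \Cref{thm:meta-importance-robust}) for this fixed cut. I need the ratio bound $\frac{\sum_{j\le m}x_j}{x_1}\le \Delta$ with $\Delta=\poly(m,n)$: the numerator is the final cut value, at most $m$, and the denominator is the value of the cut at the first time it becomes positive, which is at least $1$; so $\Delta\le m$ works, giving $\log\Delta=O(\log m)$. Plugging in $\delta = 2^{-\Theta(n\log n)}$ (chosen so that it survives the union bound in the next step) and the given $\eps$, the theorem yields amplification $a=O(\eps^{-2}\log\frac{\log m}{\eps\,2^{-\Theta(n\log n)}})=O(\eps^{-2}(n\log n+\log\log m))=\tilde O(\eps^{-2}n)$, which matches the $O(n)$-amplification the algorithm uses. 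Hence for this one fixed cut, with probability at least $1-2^{-\Theta(n\log n)}$, we get $\mintcut_{H'_i}(V_1,\ldots,V_k)\in(1\pm\eps)\mintcut_{H_i}(V_1,\ldots,V_k)$ simultaneously for all $i\in[m]$.

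The subtle point — and the place the adaptive setting bites — is the union bound over cuts. In the oblivious case one stratifies cuts by value (there are only $n^{O(\alpha)}$ cuts of normalized value $\le\alpha\lambda$, by \Cref{fact:number-k-cuts}/\cite{Quanrud2024}) and uses a per-stratum failure probability; but under an adaptive adversary the hypergraph, hence the set of small cuts, is a random object, so one cannot condition on it before applying \Cref{thm:importance_samp_adv_robust_main}. The resolution, as flagged in the technical overview, is to pay for a \emph{crude} union bound over \emph{all} partitions of $V$: by \Cref{fact:number-k-cuts} there are $B_n\le 2^{n\log n}$ of them. Since each fails with probability at most $2^{-\Theta(n\log n)}$, choosing the constant large enough makes the total failure probability at most $2^{-4n}$ (for a fixed $i$; the all-$i$ guarantee is already inside each application of \Cref{thm:importance_samp_adv_robust_main}). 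This is where the extra $O(n)$ factor in $a$ (and ultimately in the storage bound of \Cref{theorem:hypergraph-cut-sparsifier}) comes from, and it is the main obstacle in the sense that it is the only place the proof genuinely departs from the non-adaptive argument.

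One technical care-point I would spell out: the sampling probabilities in \Cref{algorithm:sample-hypergraph} depend on the edge strengths $\kappa_{e_j}$, which the algorithm cannot compute exactly online, so it uses an estimate; but the game/Theorem statement only requires $p_t\ge \min\{1, a\,x_t/\sum_{i\le t}x_i\}$, an inequality, so any over-estimate of the correct probability (equivalently, any sufficiently good under-estimate of strength, as \cite{AG09} arranges) is a valid strategy for $\adversary$ in \Cref{thm:meta-importance-robust}. I would note that this is precisely the remark in the paper's footnote on \Cref{def:online-importance-sampling}, and that by the dominant-strategy argument in the proof of \Cref{thm:importance_samp_adv_robust_main} it suffices to check validity while the estimate so far is a $(1\pm\eps)$-approximation. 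Assembling: fix a cut, it is maintained by valid online importance-sampling with amplification $\tilde O(\eps^{-2}n)$, so by \Cref{thm:importance_samp_adv_robust_main} it is $(1\pm\eps)$-approximated for all $i$ except with probability $2^{-\Theta(n\log n)}$; union bound over all $B_n$ partitions gives the claimed $1-2^{-4n}$.
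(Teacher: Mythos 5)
Your overall structure matches the paper's: fix an arbitrary $k$-cut, view the stream of crossing hyperedges (with indicator weights $x_j\in\{0,1\}$) as an instance of online importance-sampling, invoke \Cref{thm:importance_samp_adv_robust_main} with $\Delta\le m\le 2^n$ and $\delta=2^{-\Theta(n\log n)}$ to get amplification $\rho=O(\eps^{-2}n\log n)$, and then pay for the adaptive setting with a crude union bound over all $B_n\le 2^{n\log n}$ partitions. All of that is the right plan and matches the paper.

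The one genuine gap is in the key domination step, and it is precisely where the paper is careful. \Cref{algorithm:sample-hypergraph} sets $p_i=\min\{\rho/\kappa_{e_i}^{H_i'},1\}$, i.e., the strength is measured in the \emph{sparsifier} $H_i'$, not the original $H_i$. To match \Cref{def:online-importance-sampling}, which requires $p_t\ge \min\{1,\,a\,x_t/(x_t+\sum_{j<t}\tilde x_j)\}$, one needs to compare $\kappa_{e_i}^{H_i'}$ to the \emph{sparsifier's} cut value $\mintcut_{H_i'}(V_1,\ldots,V_k)$, which is exactly the running estimate $x_i+\sum_{j<i}\tilde x_j$. The paper proves the clean, deterministic identity $\kappa_{e_i}^{H_i'}\le \mintcut_{H_i'}(V_1^*,\ldots,V_k^*)$ by tracking how the cut separates the $\kappa_{e_i}$-strong component inside $H_i'$ itself; this lines up directly with Definition~\ref{def:online-importance-sampling} and then \Cref{thm:importance_samp_adv_robust_main} applies as a black box, with its internal bootstrap doing the rest. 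Your proposal instead asserts $\kappa_{e_j}^{H_j}\le \mintcut_{H_j}(V_1,\ldots,V_k)$, a statement entirely about the \emph{original} hypergraph, and then attempts to bridge to the algorithm's sparsifier-based probabilities by appealing to the estimate being good so far. That bridge is exactly the nontrivial step: relating $\kappa_{e}^{H'}$ to $\kappa_{e}^{H}$ (or to $\mintcut_H$) is not immediate, since strength involves cuts of \emph{induced} sub-hypergraphs, and a cut sparsifier is not a priori known to preserve those — and in any case the argument would be circular before correctness is established. Replacing your inequality by the paper's in-sparsifier bound $\kappa_{e_i}^{H_i'}\le \mintcut_{H_i'}(V_1^*,\ldots,V_k^*)$ removes the circularity and makes the reduction to \Cref{thm:importance_samp_adv_robust_main} immediate, with no auxiliary ``strengths are preserved'' claim needed.
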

\begin{lemma}[Size]
    \label{lemma:cut-sparsification-size}
    The number of hyperedges in the output of \Cref{algorithm:sample-hypergraph} is $O(\epsilon^{-2}n^2 \log m)$ with probability at least $1-2^{-4n}$.
\end{lemma}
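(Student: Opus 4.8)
The plan is to bound the number of stored hyperedges by (essentially) the expected number of samples, and then add a martingale concentration step. Recall that the algorithm keeps each hyperedge $e_i$ with some probability $p_i = \min\{1,\rho\,\tilde\sigma_i\}$, where $\tilde\sigma_i$ is the online importance of $e_i$ — up to constants, $1/\tilde\kappa_i$ for $\tilde\kappa_i$ the estimated strength of $e_i$ computed from the running sparsifier $H'_{i-1}\cup\{e_i\}$ — and $\rho = \tO(\epsilon^{-2}n)$ is the amplification dictated by \Cref{thm:importance_samp_adv_robust_main} together with the crude union bound over $\le B_n \le 2^{n\log n}$ $k$-cuts (so that the per-cut failure probability is $2^{-\Omega(n\log n)}$, giving $\log\tfrac{1}{\delta}=O(n\log n)$ in the amplification parameter, the $\log\log\Delta$ term being lower order since $\Delta = \poly(m)$ for unweighted hypergraphs). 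Writing $X_i$ for the indicator that $e_i$ is sampled and $\mathcal F_{i-1}$ for the $\sigma$-algebra of the first $i-1$ rounds, we have $\E[X_i\mid \mathcal F_{i-1}] = p_i \le \rho/\tilde\kappa_i$, so it suffices to show that $\sum_{i=1}^m 1/\tilde\kappa_i = \tO(n\log m)$ on a high-probability event, and then that $\sum_i X_i$ concentrates around $\rho\sum_i 1/\tilde\kappa_i = \tO(\epsilon^{-2}n^2\log m)$.

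First I would condition on the event $\mathcal E$ that $H'_i$ is a $(1\pm\epsilon)$-cut sparsifier of $H_i$ for every $i\in[m]$; by \Cref{lemma:cut-sparsification-quality} and a union bound over the $m$ insertions, $\Pr[\mathcal E]\ge 1 - m\,2^{-4n}$ (the constant $4$ in the exponent can be enlarged so that this factor is absorbed). On $\mathcal E$ I would argue that the estimated strengths $\tilde\kappa_i$ are within a constant factor of the true online strengths $\kappa_{e_i}^{H_i}$, since preserving all $k$-cuts of $H_{i-1}$ up to $(1\pm\epsilon)$ preserves (up to constants) the minimum normalized cut of the relevant sub-hypergraphs, and hence $\kappa_{e_i}^{H'_{i-1}\cup e_i}$ versus $\kappa_{e_i}^{H_{i-1}\cup e_i}=\kappa_{e_i}^{H_i}$. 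This reduces the task to bounding the ``sum of online strengths'' $\sum_{i=1}^m 1/\kappa_{e_i}^{H_i}$. This is the hypergraph analog of the Ahn--Guha bound \cite{AG09}: although the offline Nash--Williams--type inequality gives only $\sum_{i}1/\kappa_{e_i}^{H_m}\le n-1$ for the \emph{final} hypergraph, a charging argument that partitions the stream into $O(\log m)$ phases according to the doubling of the global minimum normalized cut value $\lambda(H_t)$ — and within each phase applies the Nash--Williams bound (using the structural facts of \cite{Quanrud2024,KPS24}) to the edges whose online strength is comparable to that phase's value — yields $\sum_{i=1}^m 1/\kappa_{e_i}^{H_i}=O(n\log m)$. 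Combining, $\sum_i p_i = \tO(\epsilon^{-2}n)\cdot O(n\log m) = \tO(\epsilon^{-2}n^2\log m) =: B$ on $\mathcal E$.

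It remains to turn the bound on $\sum_i p_i$ into a high-probability bound on $\sum_i X_i$ despite the adaptive dependencies. Since $X_i\in\{0,1\}$ and $\E[X_i\mid\mathcal F_{i-1}]=p_i$, the process $W_t \eqdef \exp\big(\lambda\sum_{i\le t}X_i - (e^\lambda-1)\sum_{i\le t}p_i\big)$ is a supermartingale for every $\lambda>0$, hence $\E[W_m]\le 1$. Taking $\lambda=\ln 2$ and using that $\sum_i p_i\le B$ on $\mathcal E$, a Markov bound gives $\Pr[\sum_i X_i \ge 2B \text{ and } \mathcal E] \le \exp(-\Omega(B))$, which is at most $2^{-4n}$ since $B=\tilde\Omega(n^2)\gg n$. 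Together with $\Pr[\overline{\mathcal E}]\le m\,2^{-4n}$ (absorbed into the constants as above), we conclude that with probability at least $1-2^{-4n}$ the algorithm stores at most $2B = \tO(\epsilon^{-2}n^2\log m)$ hyperedges, as claimed.

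The step I expect to be the main obstacle is the ``sum of online strengths'' bound $\sum_{i=1}^m 1/\kappa_{e_i}^{H_i}=O(n\log m)$, together with the related step of certifying that the strengths estimated from the running sparsifier track the true online strengths on $\mathcal E$. The former requires adapting the Ahn--Guha charging argument from graphs to the normalized-$k$-cut notion of hypergraph strength, and the latter needs care because induced sub-hypergraphs of a cut sparsifier are not automatically cut sparsifiers of the corresponding induced sub-hypergraphs — one must verify that the particular quantity $\max_{W\subseteq V}\lambda(H[W\cup e])$ used to set the sampling probability is nonetheless preserved up to constants when all $k$-cuts of the ambient hypergraph are preserved.
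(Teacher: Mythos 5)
Your approach is genuinely different from the paper's, and it is both heavier and incomplete. The paper does \emph{not} bound $\sum_i p_i$ and then concentrate: instead it gives a \emph{deterministic} counting argument conditioned on the correctness event $\mathcal E$ of \Cref{lemma:cut-sparsification-quality}. Concretely, the paper (i) shows that correctness implies the total weight of the sparsifier is $O(n|E_i|)$ (\Cref{lemma:total-hyperedge-weight}), (ii) shows by a contradiction argument that the total weight of sampled hyperedges whose sampling-time strength was $\le\kappa$ is at most $n\kappa(1+1/\rho)$ (\Cref{claim:sum-weight-F-kappa}), and (iii) observes that each sampled edge of strength $\kappa_e$ receives weight $w'_e=1/p_e\ge\kappa_e/\rho$, so the \emph{number} of sampled edges is at most $\rho\sum_e w'_e/\kappa_e$; an Abel summation over strength thresholds with the bound from (ii) then yields $|E'|=O(\rho n\log\kappa^*)=O(\epsilon^{-2}n^2\log m)$. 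No martingale concentration, no bound on $\sum_i p_i$, and crucially no comparison between strengths in the sparsifier and strengths in $H_i$.

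Your proposal has two genuine gaps, and you correctly anticipated the first. (1) The ``sum of online strengths'' bound $\sum_i 1/\kappa_{e_i}^{H_i}=O(n\log m)$ is asserted, not proved, and your sketch (phase the stream by doubling of the global minimum normalized cut and apply Nash--Williams in each phase) does not obviously give it: individual edge strengths can be far larger than the global minimum cut, so phasing by $\lambda(H_t)$ does not control the $\kappa_{e_i}$ that actually appear in the sum. The paper's \Cref{claim:sum-weight-F-kappa} achieves the analogous goal by a cleaner argument: if the sampled low-strength edges accumulated too much weight they would, by \Cref{claim:total-weight-hyperedge-strength-lower-bound}, force a strong component, contradicting the low strength recorded at sampling time. (2) The detour through the ``true'' strengths $\kappa_{e_i}^{H_i}$ is unnecessary and, as you yourself flag, problematic: the algorithm computes strengths in $H'_{i-1}\cup\{e_i\}$, induced sub-hypergraphs of cut sparsifiers need not be cut sparsifiers of induced sub-hypergraphs, and the paper simply never makes this comparison --- it works throughout with the strengths the algorithm actually computed. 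Finally, your supermartingale concentration step, while it could probably be made rigorous with a stopped-process argument (since $\sum p_i\le B$ only holds on $\mathcal E$), is extra machinery the paper avoids entirely by making the count deterministic given $\mathcal E$.
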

\Cref{theorem:hypergraph-cut-sparsifier} follows by a union bound on the two events.
\ifpods
The proof of \Cref{lemma:cut-sparsification-size} is deferred to \Cref{sec:deferred-hypergraph}.
\fi

\begin{algorithm}
    \caption{SAMPLE-HYPERGRAPH}
    \begin{algorithmic}[1]
        \State $H' \gets (V, E'=\emptyset)$
        \State $\rho \gets K_1 \epsilon^{-2} n \log n$ \Comment{where $K_1$ is a large enough constant}
        \While {new edge ${e_i}$}
            \State $coin\gets True$ with probability $p_i = \min\{\rho/\kappa_{e_i}^{H_i'},1\}$, and otherwise $coin\gets False$
            \If{$coin$}
                \State $E' \gets E'\cup \{e_i\}$
                \State $w_{e_i}' \gets \tfrac{1}{p_i}$
            \EndIf
            \State output $coin$ \Comment{may also output $H'$}
        \EndWhile
    \end{algorithmic}
    \label{algorithm:sample-hypergraph}
\end{algorithm}
\begin{proof}[Proof of \Cref{lemma:cut-sparsification-quality}]
    Fix a $k$-cut $(V_1^*,\ldots,V_k^*)$ and consider a hyperedge $e_i$ that intersects the cut.
    Observe that since the cut intersects the hyperedge $e_i$, it separates the $\kappa_{e_i}$-strong component $W$ containing $e$. 
    Let $W_1,\ldots,W_{k'}$ be the partition of $W$ induced by the cut $(V_1^*,\ldots,V_k^*)$, where $k'\leq k$.
    By definition, we have $\kappa_{e_i}\le \mintcut_{H'_i[W]}(W_1,\ldots,W_{k'})/(k'-1) \le \mintcut_{H'_i[W]}(W_1,\ldots,W_k)$ and since expanding the cut to the entire hypergraph $H'_i$ does not decrease the cut value, we have $\kappa_{e_i}\le \mintcut_{H'_i}(V_1^*,\ldots,V_k^*)$.
    Therefore, the sampling probability satisfies $p_{e_i}= \min\{\rho/\kappa_{e_i},1\}\geq \min\{\rho/\mintcut_{H'_i}(V_1^*,\ldots,V_k^*),1\}$.

    This is precisely the setting of \Cref{thm:importance_samp_adv_robust_main}, since the maximum value of each cut is at most $m$ and its minimum value is at least $1$.
    Recalling that  $T=m\leq 2^n,\delta=2^{-5n\log n}$ and setting $\rho=O(\epsilon^{-2}\log \tfrac{\log T}{\epsilon\delta})=O(\epsilon^{-2}n\log n)$, the probability that the cut is preserved is at least $1-2^{-5n \log n}$.
    The proof concludes by applying a union bound over all $2^{n\log n}$ $k$-cuts.
\end{proof}

\ifpods
\else

We now turn to bound the number of hyperedges in the sparsifier, proving \Cref{lemma:cut-sparsification-size}.
The proof is similar to Theorem 3.2 in \cite{AG09}.
\begin{proof}[Proof of \Cref{lemma:cut-sparsification-size}]
    We begin by proving several useful claims about hyperedge strengths.
    The first claim is an extension of \cite[Lemma 3.1]{BK96}, on the occurence of $\alpha$-strong components, to hypergraphs.
    Recall that a component $A\subseteq V$ is called $\alpha$-strong if every normalized $k$-cut $A_1,\ldots,A_k$ of $A$ satisfies $\mintcut_H(A_1,\ldots,A_k)/(k-1) \ge \alpha$.
    \begin{claim}
        \label{claim:total-weight-hyperedge-strength-lower-bound}
        A hypergraph with total hyperedge weight at least $\alpha\cdot(n-1)$ has an $\alpha$-strong component. 
    \end{claim}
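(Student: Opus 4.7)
The plan is to prove the contrapositive by induction on $n=|V|$: if $H$ has no $\alpha$-strong component, then $w(H) < \alpha(n-1)$. The base case $n=1$ is immediate, since there is no total hyperedge weight to account for and the bound becomes $w(H)<0$, which must hold because otherwise the single vertex is trivially an $\alpha$-strong component (there are no normalized cuts to violate).

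For the inductive step, suppose $|V|=n\geq 2$ and $H$ contains no $\alpha$-strong component. In particular, $V$ itself is not $\alpha$-strong, so by the definition of $\lambda(H)$ there must exist some $k$-cut $(V_1,\ldots,V_k)$ with $|E[V_1,\ldots,V_k]|/(k-1) < \alpha$, i.e., $w(E[V_1,\ldots,V_k]) < \alpha(k-1)$. Here is where the hypergraph setting (rather than graphs) enters cleanly: each hyperedge either crosses the partition (and is counted in $E[V_1,\ldots,V_k]$) or is entirely contained in exactly one part $V_i$ (and is counted in $H[V_i]$), giving the clean accounting $w(H) = w(E[V_1,\ldots,V_k]) + \sum_{i=1}^k w(H[V_i])$.

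Next I would observe that the ``no $\alpha$-strong component'' property descends to induced sub-hypergraphs: any $W \subseteq V_i$ satisfies $H[V_i][W] = H[W]$, so an $\alpha$-strong component of $H[V_i]$ would also be an $\alpha$-strong component of $H$, contradicting our assumption. Therefore the inductive hypothesis applies to each $H[V_i]$ on $|V_i|<n$ vertices, giving $w(H[V_i]) < \alpha(|V_i|-1)$. Summing and using $\sum_i (|V_i|-1) = n-k$ yields
\[
w(H) \;<\; \alpha(k-1) + \sum_{i=1}^k \alpha(|V_i|-1) \;=\; \alpha(k-1) + \alpha(n-k) \;=\; \alpha(n-1),
\]
which completes the induction.

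I don't expect a serious obstacle here; this is the standard Benczúr--Karger style argument adapted to hypergraphs. The only subtle points to check are that (i) the normalized-cut definition used to bound $w(E[V_1,\ldots,V_k])$ indeed extends to the weighted setting in the natural way (sum of weights of hyperedges that are not contained in any $V_i$, divided by $k-1$), and (ii) the definition of $\alpha$-strong component is taken over all $W\subseteq V$ (not just proper subsets) so that the inheritance to $H[V_i]$ goes through. If the paper's formal definition restricts $W$ to be a proper subset or requires maximality in a way that interacts oddly with induction, one may need to rephrase the claim slightly to say ``some subset $W\subseteq V$ with $\lambda(H[W])\geq \alpha$'' before concluding via maximality.
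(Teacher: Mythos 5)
Your proof is correct and follows essentially the same argument as the paper: the paper phrases it as a minimal-counterexample contradiction, while you phrase it as induction on $n$ via the contrapositive, but these are the same proof in different packaging. One small improvement in your version is that you explicitly justify why the inductive hypothesis applies to each $H[V_i]$ (namely, that the absence of an $\alpha$-strong component is inherited by induced sub-hypergraphs), which the paper leaves implicit in its invocation of minimality.
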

    \begin{proof}
        The proof is by contradiction.
        Let $n$ be the minimum integer for which there exists a counter example, i.e., a weighted hypergraph $G=(V,E,w)$ that has total hyperedge weight at least $\alpha\cdot(n-1)$, but no $\alpha$-strong component.
        In particular, $G$ is not $\alpha$-strong.
        Hence, there exists a $k$-cut, $V_1,\ldots,V_k$ in $G$ with normalized cut value at most $\mintcut_G(V_1,\ldots,V_k)/(k-1)<\alpha$, for some $k\le n$.
        
        Denote $n_i = |V_i|$ and for every vertex set $S\subseteq V$, denote by $E[S]$ the set of hyperedges in the induced hypergraph $G[S]$.
        By the minimality of $n$, the total hyperedge weight in $G[V_i]$ is at most $\alpha\cdot(n_i-1)$ for all $i\in[k]$.
        Therefore, summing the total weight of hyperedges,
        \begin{equation*}
            \mintcut_G(V_1,\ldots,V_k) + \sum_{i=1}^k w(E[V_i])
            < 
            \alpha(k-1) + \sum_{i=1}^k \alpha(n_i-1)
            = \alpha (n-1)
            ,
        \end{equation*}
        which is in contradiction to the total weight of the hyperedges in $G$.
        Therefore, no such counter example exists.
    \end{proof}
    Next we prove the following useful claim bounding the total weight of hyperedges in the sparsifier.
    \begin{claim}
        \label{lemma:total-hyperedge-weight}
        If $H_i'$ is a $(1\pm \epsilon)$ cut sparsifier of $H_i$ then $\sum_{e\in E_i'} w_e' \le (1+\epsilon) n/2\cdot|E_i|$.
    \end{claim}
    \begin{proof}
        Observe that 
        \begin{equation*}
            \sum_{v\in V} \mintcut_{H_i}(\{v\},V\setminus \{v\})
            \le n\cdot |E_i|
            ,
        \end{equation*}
        since every hyperedge is counted at most $n$ times.
        Similarly, we have that 
        \begin{equation*}
            2\cdot \sum_{e\in E_i'} w_e' 
            \le  
            \sum_{v\in V} \mintcut_{H_i'}(\{v\},V\setminus \{v\})
            \le (1+\epsilon) \sum_{v\in V} \mintcut_{H_i}(\{v\},V\setminus \{v\})
            \le (1+\epsilon) n\cdot |E_i|
            ,
        \end{equation*}
        where the first inequality is since every hyperedge is counted at least twice.
    \end{proof}
    Let $F_{\kappa}= \{ e_j\in E'_i \mid j\le i, \kappa_{e_j} \le \kappa\}$, 
    be the set of all sampled hyperedges that had strength at most $\kappa$ in $H_{j-1}'\cup e_j$ when they were added.
    The following claim bounds the total weight of hyperedges in $F_{\kappa}$.
    \begin{claim}
        \label{claim:sum-weight-F-kappa}
        The total weight of hyperedges in $F_{\kappa}$ is at most $n\kappa(1+1/\rho)$.
    \end{claim}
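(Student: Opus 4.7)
The plan is to apply the contrapositive of \Cref{claim:total-weight-hyperedge-strength-lower-bound} to the weighted hypergraph $F_\kappa$: if one can show that the normalized cut value $\lambda(F_\kappa[W])$ of every induced sub-hypergraph satisfies $\lambda(F_\kappa[W])\le\kappa(1+1/\rho)$, then \Cref{claim:total-weight-hyperedge-strength-lower-bound} immediately forces the total weight of $F_\kappa$ to be at most $\kappa(1+1/\rho)(n-1)\le n\kappa(1+1/\rho)$, which is exactly the statement of the claim.

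The key lemma to establish is therefore the bound $\lambda(F_\kappa[W])\le\kappa(1+1/\rho)$ for every $W\subseteq V$, and the approach will be a ``last-edge'' analysis in the spirit of the BK96/AG09 argument for graphs. Fix $W\subseteq V$ with $F_\kappa[W]$ non-empty and let $e_{j_k}$ be the hyperedge of $F_\kappa[W]$ added latest in the stream. Every other hyperedge of $F_\kappa[W]$ already lives in the sub-sparsifier $H'_{j_k-1}$ with its sampled weight. Let $\hat H = H'_{j_k-1}\cup\{e_{j_k}\}$ with $e_{j_k}$ carrying unit weight; this is precisely the hypergraph against which the algorithm evaluated $\kappa_{e_{j_k}}^{H'_{j_k}}$. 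Since $e_{j_k}\in F_\kappa$ we have $\kappa_{e_{j_k}}^{\hat H}\le \kappa$, and because $e_{j_k}\subseteq W$, the definition of strength gives $\lambda(\hat H[W])\le \kappa_{e_{j_k}}^{\hat H}\le \kappa$.

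The remaining step is a clean comparison between $\lambda(F_\kappa[W])$ and $\lambda(\hat H[W])$. For any fixed $k$-cut of $W$, the cut value in $F_\kappa[W]$ differs from the cut value in $\hat H[W]$ only because (i) $\hat H[W]$ may contain additional sampled edges (those in $E'_{j_k-1}$ but not in $F_\kappa$), which can only increase its cut value, and (ii) $e_{j_k}$ carries unit weight in $\hat H[W]$ versus weight $1/p_{j_k}$ in $F_\kappa[W]$. Hence the cut value in $F_\kappa[W]$ exceeds that in $\hat H[W]$ by at most $1/p_{j_k}-1$, which is either $0$ (when $p_{j_k}=1$) or $\kappa_{e_{j_k}}/\rho-1\le \kappa/\rho$. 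Evaluating this inequality at a min $k$-cut of $\hat H[W]$ and dividing by $k-1\ge 1$ yields $\lambda(F_\kappa[W])\le\lambda(\hat H[W])+\kappa/\rho\le\kappa(1+1/\rho)$, as desired.

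The most delicate step is expected to be this final comparison, because it must carefully account for the mismatch between the pre-sampling unit weight at which the algorithm measures $\kappa_{e_{j_k}}$ and the post-sampling weight $1/p_{j_k}$ actually carried by $e_{j_k}$ inside $F_\kappa$. This mismatch is precisely what produces the $1+1/\rho$ factor in the final bound; everything else is either definitional or an immediate consequence of monotonicity of $\lambda$ under edge addition.
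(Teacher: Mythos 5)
Your proof is correct and takes essentially the same approach as the paper: both reduce to \Cref{claim:total-weight-hyperedge-strength-lower-bound} and then run a last-edge analysis bounding how much the sparsifier weight $1/p_{j_k}$ of that edge can inflate cut values relative to the unit-weight hypergraph $\hat{H}$ against which the algorithm measured its strength, with your direct bound on $\lambda(F_\kappa[W])$ for every $W$ being the contrapositive form of the paper's proof by contradiction via a putative $(\kappa+\kappa/\rho+1)$-strong component. One note: the paper writes ``first edge that was sampled into $F_\kappa$'' where the argument in fact needs the \emph{last} such edge inside the component (so that the rest of $F_\kappa[W]$ already lives in $H'_{j_k-1}$), which is exactly the reading you take.
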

    \begin{proof}
        Let $G_{\kappa} = (V,F_{\kappa})$ be the sub-hypergraph of the sparsifier that comprises of all the hyperedges in $F_{\kappa}$.
        Observe that if $G_{\kappa}$ has no $(\kappa+\kappa/\rho+1)$-strong component then the total weight of hyperedges in $F_{\kappa}$ is at most $n(\kappa+\kappa/\rho)$ by \Cref{claim:total-weight-hyperedge-strength-lower-bound}.
        Therefore, assume towards contradiction that $G_{\kappa}$ has a $(\kappa+\kappa/\rho+1)$-strong component.

        Let $e$ be the first edge that was sampled into $F_{\kappa}$ that is in the $(\kappa+\kappa/\rho+1)$-strong component.
        Notice that since $H$ is an unweighted hypergraph, the weight of $e$ in the sparsifier is at most $p_e^{-1} \le \rho/\kappa$.
        Hence, removing $e$ can decrease the strength of the component by at most $\kappa/\rho$;
        therefore $G_{\kappa}\setminus e$ has a $(\kappa+1)$-strong component in contradiction to $e$ being sampled with strength at most $\kappa$.
    \end{proof}
    
    We now bound the number of hyperedges in the sparsifier.
    Assume that $H_i'$ is a $(1+\epsilon)$ cut sparsifier of $H_i$, this holds for all $i$ with probability at least $1-2^{-4n}$ by \Cref{lemma:cut-sparsification-quality}.
    By \Cref{lemma:total-hyperedge-weight}, $\sum_{e\in E_i'} w_e' \le (1+\epsilon) \cdot n|E_i|/2$.
    Thus, for all $e'\in E'_i, \kappa_{e'}\leq (1+\epsilon) \rho n\cdot|E_i|/2$, since
    the maximum cut in the graph is $\leq \sum_{e\in E_i'} w_e'$ which is also an upper bound on $\kappa_{e'}$.
    Denote this upper bound on $\kappa$ by $\kappa^*$.
    Therefore, the number of hyperedges is bounded by
    \begin{align*}
        |E'|&=\sum_{\kappa=1}^{\kappa^*} |F_\kappa\setminus F_{\kappa-1}| 
        \leq\sum_{\kappa=1}^{\kappa^*} \tfrac{\rho}{\kappa}\sum_{e\in F_\kappa\setminus F_{\kappa-1}} w'_e  
        && \text{since $w'_e=\tfrac{1}{p_e}\geq \tfrac{\kappa_e}{\rho}$}\\
        &=\sum_{\kappa=1}^{\kappa^*} \tfrac{\rho}{\kappa}( w'(F_\kappa)- w'(F_{\kappa-1})) 
        = \sum_{\kappa=1}^{\kappa^*} \tfrac{\rho}{\kappa}w'(F_\kappa)
        - \sum_{\kappa=0}^{\kappa^*-1} \tfrac{\rho}{\kappa+1}w'(F_\kappa)
        \\
        &=\tfrac{\rho}{\kappa^*+1}w'(F_{\kappa^*})+\rho \sum_{\kappa=1}^{\kappa^*} (\tfrac{1}{\kappa}-\tfrac{1}{\kappa+1})w'(F_\kappa)
        && \text{since $F_0=\emptyset$}\\
        &\leq \rho n(1+\tfrac{1}{\rho})+ \rho \sum_{\kappa=1}^{\kappa^*} \tfrac{1}{\kappa+1}n(1+\tfrac{1}{\rho}) 
        && \text{by \Cref{claim:sum-weight-F-kappa}} \\
        &=O(\rho n \log \kappa^*)=O(\epsilon^{-2}n^2 \log m).
    \end{align*}    
    This concludes the proof of \Cref{lemma:cut-sparsification-size}.
\end{proof}

\subsection{Coreset Properties}
\label{sec:hypergraph-coreset-properties}
In this section we show that cut sparsifiers satisfy the properties of coresets that were presented in \Cref{sec:technical-overview}.
The properties are the following, merge, reduce and linear cost function.
We begin by showing that they satisfy the merge and reduce properties.

Let $E_1,E_2$ be two sets of hyperedges over a vertex set $V$, $H_i = (V,E_i)$ and $H_i'$ a quality $(1\pm\epsilon_i)$-cut sparsifier for $H_i$, for some $\epsilon_i\in(0,1)$.
It is easy to see that $\mintcut_{(V,E_1\cup E_2)}(S)=\mintcut_{H_1}(S)+\mintcut_{H_2}(S)$.
Furthermore, $\mintcut_{H_1'}(S)+\mintcut_{H_2'}(S)\in (1\pm \max\set{\epsilon_1,\epsilon_2})\cdot\mintcut_{(V,E_1\cup E_2)}(S)$.
Therefore, the merge property holds.

To prove the reduce property, let $H$ be some hypergraph, $H'$ be a quality $(1\pm \epsilon_1)$ sparsifier of $H$ and $H''$ be a quality $(1\pm \epsilon_2)$ sparsifier of $H'$.
Notice that,
\begin{equation*}
    \forall S\subseteq V,
    \mintcut_{H''}(S)
    \in (1\pm \epsilon_2)\mintcut_{H'}(S)
    \in (1\pm \epsilon_2)(1\pm \epsilon_1)\mintcut_{H}(S)
\end{equation*}

Finally, notice that the cost function $c:\U\times Q\to\R_+$ is simply $c(e,U)=\indic{0<|e\cap U| < |e|}$, and hence the cost function $C$ satisfies the conditions of \Cref{def:coreset}.

\fi

\ifpods

\else
\section{Application: Subspace Embedding}
\label{sec:subspace-embedding}

In this section, we consider $\ell_p$ subspace embedding and matrix spectral approximation, and prove \Cref{thm:application_ellp_subspace}.
Recall that the input is an $n\times d$ matrix 
given as a stream of rows, denoted $a_1,\ldots,a_n\in \R^d$, and the goal is to maintain an $n'\times d$ weighted submatrix 
that approximates some property of $\Mat{A}$.
We define these problems formally for real matrices, but in the streaming setting, we assume their entries are integers bounded by some $\poly(n)$, as explained in the introduction.
\begin{definition}[Matrix Spectral Approximation]
    Let $d,n,n'\in \N, \eps>0$.
    A matrix $\tilde{\Mat{A}}\in \R^{n'\times d}$ is a \emph{$(1+\eps)$-spectral approximation} of a matrix $\Mat{A}\in \R^{n\times d}$ if
    \[
    (1-\eps)\Mat{A}^\top \Mat{A} 
    \preceq \tilde{\Mat{A}}^\top \tilde{\Mat{A}} 
    \preceq (1+\eps)\Mat{A}^\top \Mat{A}.
    \]
\end{definition}
\begin{definition}[$\ell_p$-Subspace Embedding]
    Let $d,n,n'\in \N, \eps>0$.
    A matrix $\tilde{\Mat{A}}\in \R^{n'\times d}$ is a $(1+\eps)$-approximate $\ell_p$-subspace embedding of a matrix $\Mat{A}\in \R^{n\times d}$ if, for all $x\in \R^d$,
    \[
    \|\tilde{\Mat{A}}x\|_p^p\in (1\pm \eps)\|\Mat{A}x\|_p^p.
    \]
\end{definition}
\begin{remark}
    Matrix spectral approximation is the special case of $\ell_2$-subspace embedding.
\end{remark}

We prove \Cref{thm:application_ellp_subspace}, by
providing an adversarially-robust online algorithm for $\ell_p$ subspace embedding for all $p>0$.
The algorithm is presented the rows of the input matrix in an adaptive stream, and stores $O\Big(\eps^{-2}(d\log\tfrac{\kappa^{OL}}{\eps}+\log\log n)\cdot(d\log (n\kappa^{OL}))^{\max(1,p/2)}\Big)$ rows, where $\kappa^{OL}$ is the online condition number of $\Mat{A}$, defined as the ratio between the largest singular value of $\Mat{A}$ and the smallest non-zero singular value across all $\Mat{A}_i$.
The algorithm assumes a bound on $\kappa^{OL}$ known in advance.

The algorithm, given in \Cref{algorithm:row-sampling-subspace-embedding}, is based on online importance-sampling.
After the $i$-th insertion, the algorithm holds a weighted submatrix $\tilde{\Mat{A}}_i$ of $\Mat{A}_i$.
For parameter $\lambda>0$, define the online importance of $a_i\in \spn\{\Mat{A}_{i-1}\}$ as $s_i' \eqdef \max_{x\in \spn({\Mat{A}_i})}\tfrac{|a_i^\top x|^p}{\|\Mat{A}_i x\|_p^p+\lambda \|x\|_p^p}$, where $\spn({\Mat{A}_i})$ is the row-span of ${\Mat{A}_i}$ (and if $a_i\not\in \spn{\Mat{A}_{i-1}}$, then its online importance equals $1$).
Note that the importance has an additional $\lambda\Norm{x}_p^p$ term, this is because the algorithm uses a ``ridge'' version of the importances for technical reason.
(For $p=2$, this is equivalent to online ridge leverage scores~\cite{CMP16}, defined as $\tau_i = a_i^\top (\Mat{A}_i^\top \Mat{A}_i + \lambda I)^{-1} a_i = \max_{x\in \spn({\Mat{A}_i})}\tfrac{|a_i^\top x|^2}{\|\Mat{A}_i x\|_2^2+\lambda\|x\|_2^2}$.)
We defer the setting of $\lambda$, suffice is to say that it is sufficiently small, so estimating $\|\tilde{\Mat{A}}x\|_p^p+\lambda\|x\|_p^p$ for all $x\in \spn(\Mat{A})$ yields an $\ell_p$ subspace embedding.

Our analysis proceeds similarly to \cite{BravermanHMSSZ21,JPW23}, by analyzing the error on a fixed $\eps$-net $Y$ of the unit ball $B(0,1)$.
Consider a net point $x\in Y\cap \spn({\Mat{A}_i})$.
Every new row $a$ has bounded norm, $\|a\|_p^p\le \poly(n)$ since the entries of $\Mat{A}$ are bounded, hence
for all $i\in[n]$, we have $\|\Mat{A}_i x\|_p^p+\lambda\|x\|_p^p \in [\lambda,\poly(n)]$, and hence adversarial robustness can be obtained
via \Cref{thm:importance_samp_adv_robust_main}. 
(We essentially view $\lambda\|x\|_p^p$ as the first item in the stream, hence when the first actual row arrives, it is sampled with probability at least its online importance with respect to $\|\Mat{A}_i x\|_p^p+\lambda\|x\|_p^p$.)
We proceed with a union bound over the net-points, and extend the correctness from net-points to the entire space by standard arguments.

\begin{algorithm}[t]
    \caption{Row sampling for $\ell_p$ subspace embedding}
    \begin{algorithmic}[1]
        \State $\tilde{\Mat{A}}\gets \emptyset$
        \State $\rho \gets K_1 \cdot \eps^{-2} (d\log\tfrac{\kappa^{OL}}{\eps}+\log\log n)$  \Comment{where $K_1$ is a large enough constant} %
        \State $\lambda \gets n^{-\Omega(pd)}$ %
        \While {new row ${a_i}$}
            \If{$a_i\in\spn(\tilde{\Mat{A}})$}
                \State $s'_i \gets \max_{x\in \spn(\tilde{\Mat{A}})} \tfrac{|a_i^\top x|^p}{\|\tilde{\Mat{A}} x\|_p^p + \lambda\|x\|_p^p}$
            \Else
                \State $s'_i\gets 1$ \label{line:not_in_span}
            \EndIf
            \State $coin\gets True$ with probability $p_i = \min\{\rho s'_i ,1\}$, and otherwise $coin\gets False$
            \If{$coin$}
                \State append the row $p_i^{-1}a_i$ to $\tilde{\Mat{A}}$
            \EndIf
            \State output $coin$ \Comment{may also output $\tilde{\Mat{A}}$}
        \EndWhile
    \end{algorithmic}
    \label{algorithm:row-sampling-subspace-embedding}
\end{algorithm}
The following lemmas provide the guarantees of \Cref{algorithm:row-sampling-subspace-embedding}.
\Cref{thm:application_ellp_subspace} follows by a union bound on these two events.
\begin{lemma}[Correctness of \Cref{algorithm:row-sampling-subspace-embedding}]\label{lem:correctness-subspace-embedding}
    For each adaptive adversary, with probability $1-\delta$,
    for all $i\in [n]$,
    \Cref{algorithm:row-sampling-subspace-embedding} outputs a $(1+\eps)$-approximate $\ell_p$-subspace embedding of $\Mat{A}_i$.
\end{lemma}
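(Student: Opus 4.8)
The plan is to reduce \Cref{lem:correctness-subspace-embedding} to a coordinate-wise application of \Cref{thm:importance_samp_adv_robust_main} along a fixed net, exactly as in the technical overview. First I would restrict attention to $U\eqdef\spn(\Mat{A}_n)$, of dimension at most $d$: a row $a_i\notin\spn(\tilde{\Mat{A}})$ gets $s_i'=1$ and is kept with probability $1$, so an easy induction gives $\spn(\tilde{\Mat{A}}_i)=\spn(\Mat{A}_i)$ for all $i$, and hence $\tilde{\Mat{A}}_i x=\tilde{\Mat{A}}_i(\mathrm{proj}_U x)$; thus it suffices to establish the embedding for $x\in U$. Then I would fix a $\gamma$-net $Y$ (in $\ell_2$) of the unit ball of $U$, with $\gamma$ polynomially small in $\eps/(d\,\kappa^{OL})$, the exact value to be pinned down in the net-to-all step; then $|Y|\le(O(1/\gamma))^{d}$, so $\log|Y|=O(d\log\tfrac{\kappa^{OL}}{\eps})$ up to a lower-order $O(d\log d)$ term.

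Fix a net point $x\in Y$ and view the non-negative reals $x_0\eqdef\lambda\|x\|_p^p$ and $x_i\eqdef|a_i^\top x|^p$ ($i=1,\dots,n$) as an adaptive stream, in which $x_0$ is the always-kept first item and the coin the algorithm tosses for row $a_i$ supplies the fresh randomness for $x_i$; its adversary is the composition of the matrix adversary with the algorithm's responses, hence a legitimate adaptive adversary for \Cref{thm:importance_samp_adv_robust_main}. By construction $\sum_{j=0}^{i}\tilde x_j=\|\tilde{\Mat{A}}_i x\|_p^p+\lambda\|x\|_p^p$ is the online importance-sampling estimator of $\sum_{j=0}^i x_j=\|\Mat{A}_i x\|_p^p+\lambda\|x\|_p^p$. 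The point that requires care is that the algorithm's probability $p_i$ dominates the online importance demanded by \Cref{def:online-importance-sampling}: since $s_i'=\max_{x'\in\spn(\tilde{\Mat{A}}_{i-1})}\tfrac{|a_i^\top x'|^p}{\|\tilde{\Mat{A}}_{i-1}x'\|_p^p+\lambda\|x'\|_p^p}$ maximizes over the whole current span, substituting $x$ (which lies in $\spn(\tilde{\Mat{A}}_{i-1})=\spn(\Mat{A}_{i-1})$ whenever this ratio is the active one, and otherwise $p_i=1$ trivially) gives
\[
p_i=\min\{\rho s_i',1\}\ \ge\ \min\Big\{\rho\,\tfrac{|a_i^\top x|^p}{\|\tilde{\Mat{A}}_{i-1}x\|_p^p+\lambda\|x\|_p^p},\,1\Big\}\ \ge\ \min\Big\{\rho\,\tfrac{x_i}{x_i+\sum_{j=0}^{i-1}\tilde x_j},\,1\Big\},
\]
so the hypotheses of \Cref{thm:importance_samp_adv_robust_main} hold with amplification $\Theta(\rho)$. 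For the range bound, integrality and the $\poly(n)$ bound on the entries give $\sum_i x_i=\|\Mat{A}_n x\|_p^p+\lambda\|x\|_p^p\le\poly(n,d)\|x\|_p^p$, hence $\Delta\eqdef\tfrac{\sum_i x_i}{x_0}\le 1+\poly(n,d)/\lambda=n^{O(pd)}$ once $\lambda=n^{-\Omega(pd)}$, so $\log\log\Delta=O(\log\log n+\log(pd))$. Picking $\delta$ small enough that $|Y|\,\delta$ is the target failure probability, the value $\rho=\Theta(\eps^{-2}\log\tfrac{\log\Delta}{\eps\delta})$ agrees with \Cref{algorithm:row-sampling-subspace-embedding} up to constants and absorption of lower-order terms ($\log\log\Delta$ and $\log\tfrac1\delta\approx\log|Y|$ contributing the $\log\log n$ and $d\log\tfrac{\kappa^{OL}}{\eps}$ summands). \Cref{thm:importance_samp_adv_robust_main} then yields, for each $x\in Y$, that $\|\tilde{\Mat{A}}_i x\|_p^p+\lambda\|x\|_p^p\in(1\pm\eps)(\|\Mat{A}_i x\|_p^p+\lambda\|x\|_p^p)$ for all $i\in[n]$ with probability $1-\delta$.

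A union bound over $Y$ makes this hold for all net points and all $i$ with probability $1-|Y|\delta$; two standard steps then complete the argument. \emph{(i) Net-to-all}: $x\mapsto\|\Mat{A}_i x\|_p$ and $x\mapsto\|\tilde{\Mat{A}}_i x\|_p$ are seminorms that are $\poly(n)$-bounded on the $\ell_2$-unit ball (the second conditioned on the net event, via the usual net-to-operator-norm bootstrap), while on $U$ they are at least the smallest nonzero singular value of $\Mat{A}_i$, which is $(nd)^{-\Omega(d)}$ by a determinant lower bound; a net of resolution $\gamma=\Theta(\eps\cdot(nd)^{-\Omega(d)})=\poly(\eps/(d\kappa^{OL}))$ therefore upgrades the net guarantee to $\|\tilde{\Mat{A}}_i x\|_p^p+\lambda\|x\|_p^p\in(1\pm O(\eps))(\|\Mat{A}_i x\|_p^p+\lambda\|x\|_p^p)$ for all $x\in U$ (for $0<p<1$, replace the triangle inequality by subadditivity of $\|\cdot\|_p^p$), which fixes $\gamma$. \emph{(ii) Dropping the ridge}: by the same singular-value lower bound, $\lambda\|x\|_p^p\le\eps\|\Mat{A}_i x\|_p^p$ for $x\in\spn(\Mat{A}_i)$ once $\lambda=n^{-\Omega(pd)}$, and for $x\notin\spn(\Mat{A}_i)$ we may project onto $\spn(\Mat{A}_i)=\spn(\tilde{\Mat{A}}_i)$ without changing either norm; hence $\|\tilde{\Mat{A}}_i x\|_p^p\in(1\pm O(\eps))\|\Mat{A}_i x\|_p^p$ for all $x$ and all $i$. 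Rescaling $\eps$ finishes.

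The main obstacle I anticipate is coordinating the net resolution with the amplification parameter: the net must be fine (resolution $\sim\eps\,\sigma_{\min}^{OL}/\sigma_{\max}$) for the net-to-all step to be valid, which forces $\log|Y|=\Theta(d\log\tfrac{\kappa^{OL}}{\eps})$ and therefore the $\kappa^{OL}$-dependence (not $n$-dependence) of $\rho$ --- this is precisely the source of the extra $\approx d$ factor over the oblivious analysis, which chains over the fixed, but in the adaptive setting random, set $\{x:\|\Mat{A}x\|_p=1\}$ directly. A secondary point needing care is the boundary behavior when $x\notin\spn(\tilde{\Mat{A}}_{i-1})$ or when $a_i$ adds a new direction, where one must confirm the probability-domination still holds; it does, since there $p_i=1$.
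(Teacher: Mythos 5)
Your proposal follows essentially the same route as the paper's proof: restrict to the row span, treat the ridge term $\lambda\|x\|_p^p$ as the first (always-kept) stream item, verify that the algorithm's probability $p_i$ dominates the online importance demanded by \Cref{def:online-importance-sampling} for any fixed $x$, apply \Cref{thm:importance_samp_adv_robust_main} with $\Delta\approx\poly(n)/\lambda$ and $\delta\approx|Y|^{-1}$, union bound over a net $Y$ of size $(O(\kappa^{OL}/\eps))^d$, then discharge the ridge term via the singular-value lower bound (the paper's \Cref{claim:ridge_to_eps_approx}) and extend from $Y$ to all of $\R^d$. The only material differences are cosmetic: the paper builds an $\eps'$-net of the $\ell_p$ unit ball with $\eps'=\eps/\kappa^{OL}$ and does the net-to-all step by an explicit infinite geometric decomposition $x=\sum_j x_j$ controlled by $\sigma_1$, whereas you take an $\ell_2$-net and invoke the ``standard bootstrap''; you also make explicit the (correct, by the $s_i'=1$ rule) observation that $\spn(\tilde{\Mat{A}}_i)=\spn(\Mat{A}_i)$ for all $i$, which the paper leaves implicit.
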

\begin{lemma}[Size analysis of \Cref{algorithm:row-sampling-subspace-embedding}]\label{lem:size-subspace-embedding}
    The number of rows in the output of \Cref{algorithm:row-sampling-subspace-embedding} is $O\Big(\eps^{-2}(d\log\tfrac{\kappa^{OL}}{\eps}+\log\log n + \log\tfrac{1}{\delta})\cdot(d\log (n\kappa^{OL})^{\max(1,p/2)})\Big)$ with probability $1-\delta$.
\end{lemma}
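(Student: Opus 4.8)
The plan is to mirror the structure of the size proof for hypergraph sparsification (Lemma~\ref{lemma:cut-sparsification-size}), which itself follows \cite{AG09}, but with strengths replaced by online ridge $\ell_p$ sensitivities. Condition on the event of Lemma~\ref{lem:correctness-subspace-embedding}, so that every intermediate $\tilde{\Mat{A}}_i$ is a $(1+\eps)$-approximate $\ell_p$-subspace embedding of $\Mat{A}_i$ (with the $\lambda\|x\|_p^p$ regularizer); this holds with high probability, and the size bound will be conditioned on it. The expected number of stored rows is $\sum_{i=1}^n p_i \le \rho \sum_{i=1}^n s_i'$, where $s_i'$ is the online ridge $\ell_p$ sensitivity of $a_i$ with respect to the \emph{current sparsifier} $\tilde{\Mat{A}}_{i-1}$. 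Since $\rho = O(\eps^{-2}(d\log\tfrac{\kappa^{OL}}{\eps}+\log\log n))$, it suffices to show $\sum_{i=1}^n s_i' = O((d\log(n\kappa^{OL}))^{\max(1,p/2)})$ and then convert the bound in expectation to one that holds with high probability (e.g.\ via a Chernoff/Freedman bound on the sum of the indicator variables, using that each $p_i \le 1$).

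The heart of the argument is bounding $\sum_i s_i'$. First I would relate the sensitivities computed against the sparsifier to sensitivities against the true matrix: by the conditioned correctness event, $\|\tilde{\Mat{A}}_{i-1}x\|_p^p+\lambda\|x\|_p^p = (1\pm\eps)(\|\Mat{A}_{i-1}x\|_p^p+\lambda\|x\|_p^p)$ for all $x$, so $s_i' = O(1)\cdot \tilde{s}_i$, where $\tilde{s}_i \eqdef \max_{x\in\spn(\Mat{A}_i)} \tfrac{|a_i^\top x|^p}{\|\Mat{A}_i x\|_p^p+\lambda\|x\|_p^p}$ is the online ridge $\ell_p$ sensitivity against the true prefix matrix. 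Then I would invoke the known bound on the sum of online ($\ell_p$) ridge sensitivities: for $p=2$ this is the classical online ridge leverage score bound $\sum_i \tilde{s}_i = O(d\log(n\kappa^{OL}))$ of \cite{CMP16} (roughly, the sum telescopes via $\log\det$ of $\Mat{A}_i^\top\Mat{A}_i+\lambda I$, and the $\lambda = n^{-\Omega(pd)}$ choice makes the log-determinant ratio $O(d\log(n\kappa^{OL}))$); for general $p>0$ this is the online $\ell_p$ sensitivity / Lewis-weight bound of \cite{WoodruffYasuda23}, giving $\sum_i \tilde{s}_i = O((d\log(n\kappa^{OL}))^{\max(1,p/2)})$. (One subtlety: rows not in the span contribute $s_i'=1$, but there are at most $d$ such rows, which is absorbed.) Multiplying by $\rho$ gives the claimed expectation bound, and the high-probability version follows by concentration.

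The main obstacle I anticipate is the self-referential nature of the sensitivities: $s_i'$ is defined against the random object $\tilde{\Mat{A}}_{i-1}$, not the deterministic $\Mat{A}_{i-1}$, and in the adaptive setting the sequence of rows is itself random and correlated with past coin flips. The clean way around this is exactly the conditioning used above — work on the correctness event from Lemma~\ref{lem:correctness-subspace-embedding}, on which $\tilde{\Mat{A}}_{i-1}$ is a two-sided multiplicative approximation of $\Mat{A}_{i-1}$, so $s_i'$ and $\tilde s_i$ are within a constant factor deterministically; after that the bound on $\sum_i \tilde s_i$ is a statement purely about the (adaptively chosen) true prefix matrices and holds unconditionally for any sequence of rows with integer entries in $[-\poly(n),\poly(n)]$ and online condition number at most $\kappa^{OL}$. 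A secondary point to be careful about is the conversion from "expected number of sampled rows" to "with high probability at most …"; since the $p_i$ are themselves random (they depend on the sparsifier), I would set this up as a supermartingale / Freedman argument on $\sum_{i\le t}(\indic{coin_i} - p_i)$ with the deterministic bound $\sum_i p_i = O(\rho \cdot (d\log(n\kappa^{OL}))^{\max(1,p/2)})$ available on the correctness event, which is routine once the expectation bound is in hand.
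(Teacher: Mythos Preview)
Your proposal is correct and follows essentially the same strategy as the paper: condition on the correctness event of Lemma~\ref{lem:correctness-subspace-embedding} to relate the sparsifier-based sensitivities $s_i'$ to the true online sensitivities, then invoke the Woodruff--Yasuda bound (stated in the paper as Lemma~\ref{lem:sum_of_ellp_sensitivities}) to control $\sum_i s_i'$. The only difference is in the concentration step: where you plan a direct Freedman argument on $\sum_i(\indic{coin_i}-p_i)$, the paper instead observes that the sampling probabilities already satisfy $p_i \ge \rho s_i' \ge \rho\, s_i'/\sum_{j\le i}s_j'$ (since $s_1'=1$), so the algorithm is itself performing online importance sampling for the stream $(s_i')_i$, and re-applies Theorem~\ref{thm:importance_samp_adv_robust_main} with $\Delta=n$ to conclude $\tilde S \le 2S$ with high probability.
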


\subsection{Proof of \Cref{lem:correctness-subspace-embedding} (Correctness)}
Let $x\in \R^d$ and $i\in [n]$.
We aim to show that $\|\tilde{\Mat{A}}_i x\|_p^p\in (1\pm \eps)\|\Mat{A}_i x\|_p^p$.
Assume without loss of generality that $x\in \spn(\Mat{A}_i)$. Otherwise, we can decompose $x=x_\perp+x_\parallel$, where $x_\parallel\in \spn(\Mat{A}_i)$ and $x_\perp$ in the space orthogonal to $\spn(\Mat{A}_i)$.
Notice that $\Mat{A}_i x_\perp = \tilde{\Mat{A}}_i x_\perp =0$ since $\tilde{\Mat{A}}_i$ consists of a weighted subset of the rows of $\Mat{A}_i$, hence we can indeed assume that $x\in \spn(\Mat{A}_i)$.
The following lemma states formally that is suffices to approximate $\|\tilde{\Mat{A}}_i x\|_p^p+\lambda\|x\|_p^p$ up to $(1\pm\epsilon)$ to get an $\ell_p$ subspace embedding.
\begin{claim}\label{claim:ridge_to_eps_approx}
    For $\lambda=n^{-\Omega(pd)}$, if $\|\tilde{\Mat{A}}_i x\|_p^p + \lambda \|x\|_p^p \in (1\pm \eps)(\|\Mat{A}_i x\|_p^p+\lambda \|x\|_p^p)$ then $\|\tilde{\Mat{A}}_i x\|_p^p\in (1\pm 2\eps)\|\Mat{A}_i x\|_p^p$.
\end{claim}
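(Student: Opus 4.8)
## Proof Proposal for Claim~\ref{claim:ridge_to_eps_approx}

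The plan is to show that the additive $\lambda\|x\|_p^p$ term is negligible relative to $\|\Mat{A}_i x\|_p^p$, so that a $(1\pm\eps)$-approximation of the ``ridge'' quantity transfers to a $(1\pm 2\eps)$-approximation of $\|\Mat{A}_i x\|_p^p$ itself. Recall that by the reduction at the start of the section we may assume $x\in\spn(\Mat{A}_i)$, and by homogeneity of both sides in $x$ we may normalize, say $\|x\|_2=1$. The key quantitative input is a lower bound on $\|\Mat{A}_i x\|_p^p$ for unit $x\in\spn(\Mat{A}_i)$: since the smallest nonzero singular value of $\Mat{A}_i$ is at least $\sigma_{\min}:=\sigma_{\min}(\Mat A)/\kappa^{OL}\ge 1/\kappa^{OL}$ (using the integer-entry assumption so $\sigma_{\min}(\Mat A)\ge 1$, or folding it into $\kappa^{OL}$), we get $\|\Mat{A}_i x\|_2\ge \sigma_{\min}\ge 1/\kappa^{OL}$, and then $\|\Mat{A}_i x\|_p^p\ge c_{n,p}\|\Mat{A}_i x\|_2^p\ge n^{-O(p)}(\kappa^{OL})^{-p}$ using the norm-equivalence $\|y\|_p\ge n^{-|1/p-1/2|}\|y\|_2$ on $\R^n$. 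On the other hand $\|x\|_p^p\le n^{|1/p-1/2|\cdot p}\|x\|_2^p = n^{O(p)}$. Hence, choosing $\lambda = n^{-\Omega(pd)}$ with the hidden constant large enough (and using $\kappa^{OL}\le\poly(n)$, so $(\kappa^{OL})^{-p}\ge n^{-O(p)}$), we get
\[
\lambda\|x\|_p^p \le n^{-\Omega(pd)}\cdot n^{O(p)} \le \eps\cdot n^{-O(p)}(\kappa^{OL})^{-p} \le \eps\,\|\Mat{A}_i x\|_p^p.
\]

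Given this, the rest is a short chain of inequalities. From the hypothesis,
\[
\|\tilde{\Mat{A}}_i x\|_p^p
\le (1+\eps)\big(\|\Mat{A}_i x\|_p^p+\lambda\|x\|_p^p\big)
\le (1+\eps)\|\Mat{A}_i x\|_p^p + (1+\eps)\eps\|\Mat{A}_i x\|_p^p
\le (1+2\eps)\|\Mat{A}_i x\|_p^p,
\]
for $\eps$ small (the $3\eps+\eps^2$ that formally appears can be absorbed by rescaling $\eps$ by a constant, as is done elsewhere in the paper). For the lower bound, $\|\tilde{\Mat{A}}_i x\|_p^p\ge (1-\eps)\big(\|\Mat{A}_i x\|_p^p+\lambda\|x\|_p^p\big)-\lambda\|x\|_p^p \ge (1-\eps)\|\Mat{A}_i x\|_p^p - \lambda\|x\|_p^p \ge (1-2\eps)\|\Mat{A}_i x\|_p^p$, again using $\lambda\|x\|_p^p\le\eps\|\Mat{A}_i x\|_p^p$. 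Combining gives $\|\tilde{\Mat{A}}_i x\|_p^p\in(1\pm 2\eps)\|\Mat{A}_i x\|_p^p$.

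The one subtlety — and the step I'd be most careful about — is pinning down exactly which quantity controls the lower bound on $\|\Mat{A}_i x\|_p^p$: it must be the smallest nonzero singular value \emph{across all intermediate} $\Mat{A}_i$, which is precisely why $\kappa^{OL}$ rather than the final condition number appears. Also, one should double-check the $0<p<1$ case, where $\|\cdot\|_p$ is not a norm but the inequalities $\|y\|_p^p\ge \|y\|_2^p$ (for $p\le 2$, coordinatewise since $|y_i|^p\ge |y_i|^2$ is false in general — rather one uses $\|y\|_p\ge\|y\|_2$ when $p\le 2$, valid for all $p\in(0,2]$) and $\|x\|_p^p\le n^{1-p/2}\|x\|_2^p$ still hold, so the argument is unaffected; the only place the triangle inequality for $\|\cdot\|_p$ is used is implicitly via the monotone manipulations above, which only need homogeneity and nonnegativity. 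Finally, the choice of the constant in the exponent of $\lambda=n^{-\Omega(pd)}$ should be made explicit so that it dominates all the $n^{O(p)}$ and $(\kappa^{OL})^p\le n^{O(1)}$ factors simultaneously; this is the only ``calculation'' in the proof and it is routine.
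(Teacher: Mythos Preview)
Your approach is the same as the paper's: show that $\lambda\|x\|_p^p$ is dominated by $\|\Mat{A}_i x\|_p^p$ via a lower bound on the smallest nonzero singular value of $\Mat{A}_i$, then conclude by elementary manipulation. The final algebra is fine, and the paper's proof is essentially what you wrote.

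Two quantitative claims in your argument are wrong as stated, though neither breaks the proof once corrected. First, $\sigma_{\min}(\Mat{A})\ge 1$ does not follow from integer entries (e.g.\ $\bigl(\begin{smallmatrix}1&1\\1&2\end{smallmatrix}\bigr)$ has $\sigma_{\min}<1$); what is true, and all you need, is $\sigma_{\max}(\Mat{A})\ge 1$ for a nonzero integer matrix, so $\min_i\sigma_{\min}^+(\Mat{A}_i)=\sigma_{\max}(\Mat{A})/\kappa^{OL}\ge 1/\kappa^{OL}$. Second, ``$\kappa^{OL}\le\poly(n)$'' is too optimistic: with integer entries bounded by $\poly(n)$, the smallest nonzero singular value of an intermediate $\Mat{A}_i$ can be as small as $n^{-O(d)}$, so one only gets $\kappa^{OL}\le n^{O(d)}$ --- this is exactly why the claim requires $\lambda=n^{-\Omega(pd)}$ rather than $n^{-\Omega(p)}$. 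The paper sidesteps the detour through $\kappa^{OL}$ and simply asserts $\kappa_0\ge n^{-O(d)}$ directly from the integer-entry assumption; with that correction your chain of inequalities goes through unchanged.
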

\begin{proof}
    Denote by $\kappa_0$ the smallest non-zero singular value throughout the execution.
    Observe that $\kappa_0\|x\|_2\leq \|\Mat{A}_i x\|_2$.
    For $0< p\leq 2$, by H\"older's inequality, $\|\Mat{A}_i x\|_p\geq \|\Mat{A}_i x\|_2\geq \kappa_0\|x\|_2 \geq d^{\tfrac{1}{p}-\tfrac{1}{2}} \kappa_0 \|x\|_p$. 
    Similarly, for $p>2$, $\|\Mat{A}_i x\|_p\geq n^{\tfrac{1}{p}-\tfrac{1}{2}} \kappa_0 \|x\|_p$.
    Recall that the entries of $\Mat{A}_i$ are bounded by $\poly(n)$, hence $\kappa_0\geq n^{-O(d)}$. 
    Set $\lambda \leq n^{-\Omega((p+1) d)}\leq \tfrac{\kappa_0^p}{n^p}$, and therefore $\lambda\|x\|_p^p\leq \|\Mat{A}_i x\|_p^p$.
    To conclude, if $\Norm{\tilde{\Mat{A}}x}_p^p + \lambda\Norm{x}_p^p
    \in (1\pm \epsilon)\left( \Norm{\Mat{A}x}_p^p + \lambda\Norm{x}_p^p \right) $, then
\begin{align*}
    \Norm{\tilde{\Mat{A}}x}_p^p + \lambda\Norm{x}_p^p &\in (1\pm 2\epsilon)\Norm{\Mat{A}x}_p^p
    + \lambda\Norm{x}_p^p
    .
\end{align*}
Subtracting $\lambda\Norm{x}_2^2$ from both sides we obtain \Cref{claim:ridge_to_eps_approx}.
\end{proof}

To proceed with the proof of \Cref{lem:correctness-subspace-embedding},
we show that \Cref{algorithm:row-sampling-subspace-embedding} satisfies the guarantees of \Cref{thm:importance_samp_adv_robust_main}, and we indeed obtain a $(1+\eps)$-approximation of $\lambda\|x\|_p^p+ \|\Mat{A}x\|_p^p = \lambda\|x\|_p^p+\sum_{j=1}^i |a_j^\top x|^p$.
Consider $\lambda\|x\|_p^p$ as the first item in the stream, sampled with probability $1$.
At the end of the stream, $\lambda\|x\|_p^p+\sum_{j=1}^n |a_j^\top x|^p\leq \|x\|_p^p \cdot \poly(n^p)$, hence the boundedness requirement is satisfied with $\Delta = \tfrac{\poly(n^p)}{\lambda}$.
The online importance of the $j$-th item is $\tfrac{|a_j^\top x|^p}{\|\tilde{\Mat{A}}_j x\|_p^p+\lambda \|x\|_p^p}\leq s'_j$. 
By \Cref{thm:importance_samp_adv_robust_main} with 
suitable $\delta'=\delta\cdot O(\tfrac{\eps}{\kappa^{OL}})^d$ and $\rho=O(\eps^{-2}\log \tfrac{\log (\Delta/\lambda)}{\eps\delta'})=O(\eps^{-2} (d\log \tfrac{\kappa^{OL}}{\eps}+ \log (p\log n) + \log\tfrac{1}{\delta}))$, we get a $(1+\eps)$-estimate of $\|\Mat{A}_i x\|_p^p + \lambda \|x\|_p^p$ with probability at least $1-\delta'$.

We now extend this to $(1+\eps)$-estimates for a suitable $\eps'$-net, and then extend to all of $\R^d$.
Consider an $\eps'$-net $Y$ of the $\ell_p$ unit ball $B_p(0,1)$ with $\eps'=\tfrac{\eps}{\kappa^{OL}}$.
By standard arguments, the net size is $|Y|\leq O(\tfrac{\kappa^{OL}}{\eps})^d=\tfrac{\delta}{\delta'}$, and a union bound yields correctness for all net-points with probability $1-\delta$.
Let $i\in [n]$ and $x\in \R^d$. As mentioned above, we can assume without loss of generality that $x\in \spn(\Mat{A}_i)$.
We shall represent it as an infinite sum $x=\sum_{j=0}^\infty x_j$, where each $x_j$ is a scalar multiplication of a net-point and $\|x_{j+1}\|_p\leq \eps' \|x_j\|_p$.
Let $y_0\in Y$ be the nearest net-point to $\tfrac{x}{\|x\|_p}$, and denote $x_0=\|x\|_p\cdot y_0$ and 
$r_1 = x-x_0$.
Recursively set $y_j\in Y$ as the nearest net-point to $\tfrac{r_{j}}{\|r_j\|_p}$, and denote $x_j = \|r_j\|_p\cdot y_j$ and $r_{j+1}=x-\sum_{j'=0}^j x_{j'}$.
By definition, $\|\tfrac{r_{j}}{\|r_j\|_p}-y_j\|_p\leq \eps'$, and thus $\|r_{j+1}\|_p\equiv\|r_j-x_j\|_p\leq \eps' \|r_j\|_p$.
We now show that $\|\tilde{\Mat{A}}_ix\|_p\leq (1+\eps)\|\Mat{A}_ix\|_p$.
Denote by $\sigma_1$ the largest singular value of $\Mat{A}$, by standard argument, it is larger than the largest singular value of $\Mat{A}_i$.
Observe,
\[
\sum_{j=1}^\infty \|\Mat{A}_ix_j\|_p\leq \sigma_1 \sum_{j=1}^\infty \|x_j\|_p
\leq \sigma_1 \sum_{j=1}^\infty (\eps')^j \|x_0\|_p
= O(\eps' \sigma_1  \|x_0\|_p)
\leq O(\eps\|\Mat{A}_ix_0\|_p),
\]
and by triangle inequality,
\[
\|\Mat{A}_ix_0\|_p\leq \|\Mat{A}_ix\|_p+\sum_{j=1}^\infty \|\Mat{A}_i x_j\|_p = \|\Mat{A}_ix\|_p+O(\eps \|\Mat{A}_ix_0\|_p).
\]
Thus, $\|\Mat{A}_ix_0\|_p\leq (1+O(\eps))\|\Mat{A}_ix\|_p$.
Therefore,
\[
\|\tilde{\Mat{A}}_ix\|_p\leq \sum_{j=0}^\infty \|\tilde{\Mat{A}}_ix_j\|_p\leq (1+\eps) \sum_{j=0}^\infty \|{\Mat{A}_i}x_j\|_p \leq (1+O(\eps))\|{\Mat{A}}x_0\|_p \leq (1+O(\eps))\|\Mat{A}_ix\|_p.
\]
The other direction that $\|\tilde{\Mat{A}}_ix\|_p\geq (1-O(\eps))\|\Mat{A}_ix\|_p$ is by similar arguments.
Rescaling $\eps$ concludes the proof of \Cref{lem:correctness-subspace-embedding}.

\subsection{Proof of \Cref{lem:size-subspace-embedding} (Size)}

To prove \Cref{lem:size-subspace-embedding}, we need the following result.

\begin{lemma}[Corollary 3.2 of \cite{WoodruffYasuda23}]\label{lem:sum_of_ellp_sensitivities}
    Let a matrix $\Mat{A}\in \R^{n\times d}$ with online condition number $\kappa^{OL}$ and $p\in (0,\infty)$. 
    Define $s_i\coloneqq \max_{x\in \spn(\Mat{A}_i)} \tfrac{|a_i^\top x|^p}{\|\Mat{A}_i x\|_p^p}$.
    Then, $\sum_{i=1}^n s_i = O(d \log (n\kappa^{OL}))^{\max(1,p/2)}$.
\end{lemma}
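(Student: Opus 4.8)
The plan is to reprove this bound along the lines of~\cite{WoodruffYasuda23}: dispose of the rows that enlarge the span, settle the case $p=2$ with a determinant potential, and reduce general $p$ to that case through online $\ell_p$ Lewis weights. First I would note that at most $d$ rows are \emph{fresh}, i.e.\ satisfy $a_i\notin\spn(\Mat{A}_{i-1})$ — each strictly increases the dimension of the row span — and that a fresh row has $s_i\le 1$ trivially, so fresh rows contribute at most $d$ to $\sum_i s_i$, which is absorbed in the claimed bound. Hence I may assume $a_i\in\spn(\Mat{A}_{i-1})$ for every other $i$, work inside the (at most $d$-dimensional) row span, and rescale $\Mat{A}$ so that every nonzero singular value of every prefix $\Mat{A}_i$ lies in $[1,\kappa^{OL}]$.

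For $p=2$, $s_i$ is exactly the online leverage score $a_i^\top(\Mat{A}_i^\top\Mat{A}_i)^{+}a_i$. Working inside the span once full rank is reached, the matrix-determinant lemma (together with Sherman--Morrison) gives $\det(\Mat{A}_i^\top\Mat{A}_i)=\det(\Mat{A}_{i-1}^\top\Mat{A}_{i-1})/(1-s_i)$, so using $1-t\le e^{-t}$,
\[
\sum_i s_i\le\sum_i\log\tfrac{1}{1-s_i}=\log\frac{\det(\Mat{A}_n^\top\Mat{A}_n)}{\det(\Mat{A}_{i_0}^\top\Mat{A}_{i_0})},
\]
where $i_0$ is the first full-rank time; by the normalization the right-hand side is at most $2d\log\kappa^{OL}=O(d\log(n\kappa^{OL}))$.

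For general $p$ I would replace the determinant potential by the corresponding object for the $\ell_p$ geometry. Maintain along the stream the online $\ell_p$ Lewis weights $w_i$ of the prefixes together with the reweighted prefix matrices $\Mat{W}_i^{1/2-1/p}\Mat{A}_i$, whose $\ell_2$ leverage scores equal $w_i^{2/p}$ by the Lewis fixed-point identity. Two ingredients then finish the argument: \textbf{(a)} the standard offline comparison between $\ell_p$ sensitivities and Lewis weights, which is a row-by-row statement and hence also holds in the online sense — namely $s_i\le w_i$ for $0<p\le 2$ and $s_i\le O(d^{p/2-1})\,w_i$ for $p>2$ (the $d^{p/2-1}$ factor here, tight already for $\Mat{A}$ the $2^d\times d$ sign matrix, is precisely what produces the exponent $\max(1,p/2)$); and \textbf{(b)} the bound $\sum_i w_i=O(d\log(n\kappa^{OL}))$ on the sum of online $\ell_p$ Lewis weights, proved by a determinant-type potential on the reweighted Gram matrices mirroring the $p=2$ step. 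Combining them, $\sum_i s_i\le O(d^{\max(0,p/2-1)})\cdot O(d\log(n\kappa^{OL}))\le O(d\log(n\kappa^{OL}))^{\max(1,p/2)}$, using $\sum_i a_i^q\le(\sum_i a_i)^q$ for $q\ge1$ where needed.

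The entire difficulty lies in ingredient \textbf{(b)}. Unlike in the $p=2$ case, the reweighting $\Mat{W}_i$ evolves as rows arrive, so $\Mat{W}_i^{1/2-1/p}\Mat{A}_i$ is \emph{not} a prefix of a fixed matrix and the clean determinant telescoping breaks. One must show that the online Lewis weights are stable — each changes only by a controlled multiplicative amount when a new row is inserted — so that a potential defined on the moving reweighted Gram matrices still grows by $\gtrsim w_i$ per step, up to the slack that accounts for the extra $\log(n\kappa^{OL})$ factor over the offline identity $\sum w_i=d$. This stability analysis for online Lewis weights is the technical heart of~\cite{WoodruffYasuda23}, and is exactly what Corollary~3.9 packages as a black box.
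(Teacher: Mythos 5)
The paper does not prove this lemma; it imports it verbatim as Corollary 3.9 of \cite{WoodruffYasuda23} and uses it as a black box in the proof of \Cref{lem:size-subspace-embedding}. There is therefore no in-paper proof to compare against, and you have gone further than the paper itself by attempting to reconstruct the argument.

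Your reconstruction is a reasonable sketch of the approach actually used in \cite{WoodruffYasuda23}: dispose of the at most $d$ rank-increasing rows, handle $p=2$ via the matrix-determinant telescoping (which is correct and standard, going back to \cite{CMP16}), and reduce general $p$ to the $\ell_2$ geometry through online $\ell_p$ Lewis weights. A few remarks. First, the per-row comparison $s_i \le w_i$ for $p\le 2$ and $s_i\le O(d^{p/2-1})\,w_i$ for $p>2$ should be applied with $w_i$ being the Lewis weight of the \emph{new} row $a_i$ in the prefix matrix $\Mat{A}_i$ (so that the offline identity $\sum_j w_j^{(i)}=d$ supplies the $d^{p/2-1}$ factor for that fixed prefix); stated this way it is indeed a row-local offline fact applied prefix by prefix, exactly as you intend. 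Second, your combined bound $O(d^{p/2}\log(n\kappa^{OL}))$ for $p>2$ is in fact \emph{tighter} than the lemma's stated $O(d\log(n\kappa^{OL}))^{p/2}$, so your chain of inequalities is valid but the parenthetical appeal to $\sum a_i^q\le(\sum a_i)^q$ is unused and should be dropped. Third, and most importantly, you correctly identify that the entire weight of the proof sits in ingredient (b) — bounding $\sum_i w_i^{OL}=O(d\log(n\kappa^{OL}))$ via a stability analysis for online Lewis weights, where the reweighting matrix $\Mat{W}_i$ itself drifts as rows arrive and the determinant potential no longer telescopes directly. You explicitly defer this to \cite{WoodruffYasuda23}, which is honest but means the proposal, like the paper, ultimately treats the hard step as a citation. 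Given that the paper does the same, this is consistent; just be clear that what you have supplied is a faithful roadmap to the cited proof rather than a self-contained argument.
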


\begin{proof}[Proof of \Cref{lem:size-subspace-embedding}]
    Denote $S=\sum_i  s'_i$ and $\tilde{S} = \sum_i  \tilde{s}'_i$, where $\tilde{s}'_i$ is $s'_i/p_i$ with probability $p_i$ and $0$ otherwise.

    Since $1\leq \tfrac{\rho s'_i}{p_i}$, we have that the number of sampled rows is $ \leq\rho \tilde{S}$.
    Therefore, to bound the number of sampled rows, it suffices to bound $\tilde{S}$.
    We bound $\tilde{S}$ by another application of \Cref{thm:importance_samp_adv_robust_main}.

    Observe that $s'_1 = 1$ by \Cref{line:not_in_span} of \Cref{algorithm:row-sampling-subspace-embedding}, and in general, $s'_i\leq 1$, hence $S\leq n$. 
    Moreover, we have $\tfrac{s'_i}{\sum_{j=1}^i s'_j}\leq s'_i$, hence $p_i \ge \min\set{\rho\tfrac{s'_i}{\sum_{j=1}^i s'_j},1}$, so \Cref{algorithm:row-sampling-subspace-embedding} performs online importance sampling with respect to $S$, and by \Cref{thm:importance_samp_adv_robust_main}, $\tilde{S}\leq 2 S$ with probability $1-\delta$.
    By \Cref{lem:correctness-subspace-embedding},
    \begin{align*}
        s'_i \equiv \max_{x\in \spn({\tilde{\Mat{A}}_i})} 
        \tfrac{|a_i^\top x|^p}
        {\|{\tilde{\Mat{A}}_i} x\|_p^p + \lambda\|x\|_p^p} 
        &\leq \max_{x\in \spn({\tilde{\Mat{A}}_i})} 
        \tfrac{|a_i^\top x|^p}
        {\tfrac{1}{2}\|{{\Mat{A}}_i} x\|_p^p + \lambda\|x\|_p^p} 
        \\
        &\leq 2\max_{x\in \spn({\Mat{A}_i})} \tfrac{|a_i^\top x|^p}{\|{\Mat{A}_i} x\|_p^p } 
        = 2 s_i,
    \end{align*}
    where we used that $\|{\tilde{\Mat{A}}_i} x\|_p^p \in (1\pm \epsilon) \|\Mat{A}_i x\|_p^p$ and $\epsilon<1$.
    Thus, by \Cref{lem:sum_of_ellp_sensitivities}, $S=O(d \log (n\kappa^{OL}))^{\max(1,p/2)}$, and hence the number of sampled rows is $O(\rho \cdot S) = O(\eps^{-2} (d\log \tfrac{\kappa^{OL}}{\eps} + \log \log n + \log\tfrac{1}{\delta})\cdot (d \log (n\kappa^{OL}))^{\max(1,p/2)})$.
\end{proof}

\subsection{Coreset Properties}
\label{sec:subspace-coreset-properties}
In this section we show that $\ell_p$ subspace embeddings satisfy the properties of coresets that were presented in \Cref{sec:technical-overview}.
The properties are the following, merge, reduce and linear cost function.
The linearity of the cost is immediate, since for matrix $A\in \R^{n\times d}$ and every $x\in \R^d$, we have
$\|Ax\|_p^p = \sum_{i=1}^n |a_i^\top x|^p$.
We now prove the merge and reduce properties.

We begin with the merge property.
Let $A_1,A_2$ be two real matrices with $d$ columns, and let $A'_1,A'_2$ be $(1+\eps_1)$- and $(1+\eps_2)$-approximate $\ell_p$ subspace embeddings of $A_1,A_2$, respectively, for some $\eps_1,\eps_2\in (0,1)$.
It is easy to see that for all $x\in\R^d$, 
\[\left\|\begin{bmatrix}
A_1\\
A_2
\end{bmatrix}
x\right\|_p^p=\|A_1 x\|_p^p+\|A_2 x\|_p^p,\]
and thus 
\[
\left\|\begin{bmatrix}
A_1'\\
A_2'
\end{bmatrix} x\right\|_p^p \in (1\pm \max\set{\epsilon_1,\epsilon_2})\cdot
\left\|\begin{bmatrix}
A_1\\
A_2
\end{bmatrix}
x\right\|_p^p
\]
Therefore, the merge property holds.

To prove the reduce property, let $A$ be some matrix, $A'$ be a $(1+ \epsilon_1)$ sparsifier of $A$ and $A''$ be a  $(1+ \epsilon_2)$ sparsifier of $A'$.
For all $x\in\R^d$, 
\[
\|A'' x\|_p^p\in (1\pm \epsilon_2)\|A'x\|_p^p \in (1\pm \epsilon_2)(1\pm \epsilon_1)\|Ax\|_p^p,
\]
concluding the proof.

\subsection{Applying merge-and-reduce with the online sampling}\label{sec:subspace_merge_reduce}
In this section, we prove \Cref{thm:main_subspace_streaming}. It follows by combining \Cref{thm:application_ellp_subspace} with merge-and-reduce, as shown in \Cref{sec:technical_merge_reduce}.

\begin{proof}[Proof of \Cref{thm:main_subspace_streaming}]
    We apply the black-box wrapper, \Cref{theorem:black-box-reduction}, with the adversarially-robust self-weighted online algorithm of \Cref{thm:application_ellp_subspace}, and
    an offline algorithm that stores $K=O\big(\eps^{-2} d^{\max(1,p/2)}\cdot(\log^2 d \cdot \log \tfrac{d}{\epsilon} + \log \tfrac{1}{\delta})\big)$ rows and succeeds with probability $1-\delta$~\cite{CohenP15,MuscoMWYasuda22,WoodruffYasuda23}.  
    \Cref{thm:application_ellp_subspace} produces a virtual stream with 
    $m'=O(\eps^{-2}(d\log\tfrac{\kappa^{OL}}{\eps} + \log\log n + \log\tfrac{1}{\delta}) \cdot(d\log (n\kappa^{OL})^{\max(1,p/2)})$ rows.
    Plugging this into \Cref{theorem:black-box-reduction}, we obtain $O\big(\eps^{-2}d^{\max(1,p/2)}\cdot\log^3 m'\cdot(\log^2 d \cdot \log \tfrac{d}{\epsilon} + \log \tfrac{m'}{\delta})\big) = \tO(\eps^{-2}d^{\max(1,p/2)} (\log\log(n\kappa^{OL}))^4)$ rows, concluding the proof.
\end{proof}
\fi

\ifanon
\else
\section{Acknowledgements}
We would like to thank Samson Zhou for pointing out the relevance of \cite{JPW23} for online sampling, and Sandeep Silwal for helpful comments on improving the readability of this work.
\fi

{\small
  \bibliographystyle{alphaurl}
  \bibliography{importance-robust}
} %

\appendix

\ifpods
\section{Application: Subspace Embedding}
\label{sec:subspace-embedding}

\section{Deferred Hypergraph Proofs}
\label{sec:deferred-hypergraph}

\fi

\section{Adversarial robustness of merge and reduce}\label{sec:merge_reduce_proof}
For completeness, we include a proof that the merge-and-reduce framework (\Cref{thm:merge_reduce_robust}) is adversarially robust. (\cite{BravermanHMSSZ21} make this claim, but they only provide a proof for a special case.)
\begin{proof}[Proof of \Cref{thm:merge_reduce_robust}]
    The algorithm is the classical merge-and-reduce. The stream is partitioned into blocks of size $b=g(\tfrac{\epsilon}{2\log m},\tfrac{\delta}{m})$. We implicitly construct a full binary tree of depth $\log(m/b)$. Every leaf corresponds to one block and outputs it without processing. Every inner node gets as input its two children outputs $P_1,P_2$, and using fresh randomness, it outputs $\A_{\eps',\delta'}(P_1\cup P_2)$, where $\eps'=\tfrac{\epsilon}{2\log m}$ and $\delta'=\tfrac{\delta}{m}$.
    This tree is maintained implicitly during the stream. The algorithm maintains two blocks at the bottommost tree level, and at most one coreset for the other tree levels. When a node receives its input (i.e., if it's a leaf, then it receives $b$ elements, and if it's an inner node, then it receives its two children outputs), it computes its output and frees the memory of its children (for non-leafs).
    We claim that the union of all stored sets is a $(1+\eps)$-coreset of the input, as desired (throughout, with probability $1-\delta$). The size bound is immediate, hence we focus on correctness.

    First, consider the leaf's level. When a leaf gets a stream $X=\{x_1,x_2,\ldots,x_i\}$, for $i\leq b$, it outputs $X$. This is a deterministic algorithm (and thus adversarially-robust), and $X$ is clearly a coreset of itself.
    Next, consider a non-leaf node. When the node gets its input $P_1, P_2$, it computes $\A_{\eps',\delta'}(P_1\cup P_2)$ using fresh randomness. Notice that the input $P_1, P_2$ is oblivious to the algorithm the node uses, hence the node outputs a $(1+\eps')$-coreset of $P_1\cup P_2$ with probability $1-\delta'$.
    There are less than $m$ nodes in the tree, hence by a union bound, all nodes outputs a $(1+\eps')$-coreset of their respective inputs with probability $1-m\cdot\delta'=1-\delta$. Assume this event happens.

    Now, we prove by induction that the output of every level $i$ node is a $(1+\eps')^i$-coreset of its descendant leafs. At level $0$ (the leafs), the output equals the input, and is clearly a coreset with $\eps''=0$.
    For the inductive step, assume that the output of every level $i$ node is a $(1+\eps')^i$-coreset. Consider a level $i+1$ node, whose input is $P_1,P_2$. By the merge property, $P_1\cup P_2$ is a $(1+\eps')^i$-coreset of the descendant leafs. The node outputs a $(1+\eps')$-coreset of $P_1\cup P_2$, and by the reduce property, this output is a $(1+\eps')^{i+1}$-coreset of the descendant leafs, which concludes the induction.

    Finally, observe that by the merge property, the union of all stored sets is a $(1+\eps')^{\log(m/b)}$-coreset of the input.
    We have that 
    \[(1+\eps')^{\log(m/b)}\leq 1+2\eps'\cdot\log(m/b)\leq 1+\eps,\]
    which concludes the proof.
\end{proof}

\end{document}